\newcommand*\IsExactlyOneToken[1]{%
	TT\fi
	\ifx\@empty#1\@empty
	\expandafter\@EmptyCase
	\else
	\expandafter\@IsOnlyOneToken
	\fi
	#1\@@@
}
\@ifdefinable\@IsOnlyOneToken{\def\@IsOnlyOneToken#1#2\@@@{%
		\ifx\@empty#2\@empty
}}
\@ifdefinable\@EmptyCase{\def\@EmptyCase#1\@@@{% #1 for robustness
		\iffalse % since 0 != 1
}}
\newcommand*{\parenthesis}[1]{%
	\if\IsExactlyOneToken{#1}%
	\,#1%
	\else
	\nonscript\!\left(#1\right)%
	\fi
}
\renewcommand{\leq}{\leqslant}
\renewcommand{\geq}{\geqslant}
\newtheorem{theorem}{Theorem}[section]
\newtheorem{lemma}[theorem]{Lemma}
\newtheorem{proposition}[theorem]{Proposition}
\newtheorem{definition}[theorem]{Definition}
\newtheorem{observation}[theorem]{Observation}
\newtheorem*{claim*}{Claim}
\newcommand{\ignore}[1]{}
\newcommand{\Block}[2]{C_{#1,#2}}
\newcommand{\block}{block\xspace}
\newcommand{\blocks}{blocks\xspace}
\newcommand{\Path}{R}
\newcommand{\turn}{turn\xspace}
\newcommand{\turns}{turns\xspace}
\newcommand{\oddd}{boundary\xspace}
\newcommand{\oddds}{boundaries\xspace}
\newcommand{\way}{way\xspace}
\newcommand{\ways}{ways\xspace}
\newcommand{\leaf}{leaf\xspace}
\newcommand{\leaves}{leaves\xspace}
\newcommand{\POWER}{power} % Text version \power
\DeclareMathOperator{\power}{power}
\newcommand{\degree}[2]{deg_{#1}(#2)}
\newcommand{\upperbound}{2\ell^2 + 5\ell + 7}
\title{Longest paths in $2$-edge-connected cubic graphs}
\author{Nikola K. Blanchard\thanks{Institut de Recherche en Informatique Fondamentale \& Lorraine Research Laboratory in Computer Science and its Applications. \mbox{nikola.k.blanchard@gmail.com}} \and
Eldar Fischer\thanks{Faculty of Computer Science, Israel Institute of Technology (Technion), Haifa, Israel. \mbox{eldar@cs.technion.ac.il}} \and Oded Lachish\thanks{Birkbeck, University of London, London, UK. \mbox{oded@dcs.bbk.ac.uk}} \and Felix Reidl\thanks{Birkbeck, University of London, London, UK. \mbox{f.reidl@dcs.bbk.ac.uk}}
 }
\begin{document}
\maketitle
\begin{abstract}
  \noindent 
  We prove almost tight bounds on the length of paths in $2$-edge-connected cubic
  graphs.  Concretely, we show that (i) every $2$-edge-connected cubic graph of
  size $n$ has a path of length
  $\Omega\left(\frac{\log^2{n}}{\log{\log{n}}}\right)$, and (ii) there exists a
  $2$-edge-connected cubic graph, such that every path in the graph has length
  $O(\log^2{n})$
\end{abstract}

\section{Introduction}\label{sec:intro}

Finding long cycles in graphs of low degree but high connectivity has been a
fruitful and interesting line of research. Relevant to our work is the result by
Bondy and Entringer~\cite{bondy1980twoconnected} which shows that every
2-edge-connected cubic graph on $n$ vertices has a cycle of length $\Omega(\log
n)$. Lang and Walther showed that this result is  essentially
tight~\cite{lang1968longest}.  However, to the best of our  knowledge, the
question about the longest \emph{path} has not been explicitly answered. One might be
tempted to conjecture that the answer should be similar; that the best lower
bound we can find is of the order of  $\log n$. Surprisingly, we show that
there always exists a path of the order of $\log^2 n$ (ignoring factors of
order $\log\log n$).  Using a simple construction we show that this is tight
up to lower-order factors, even for the case of planar cubic graphs.

For our lower bounds we decompose the graph along edge-cuts of size two and use
the resulting recursive structure to construct a long path.  The recurrence
takes advantage the important fact that a \emph{3-connected} cubic graph~$G$
always contains a cycle of length at least $|G|^c$, where $c = \log(1+\sqrt{5})
\approx 0.69$, which was first shown by Jackson~\cite{jackson1986longest},
proving a conjecture by Bondy and Simonovits~\cite{bondy1980threeconnected}, and
was later improved by Bilinski \etal~\cite{bilinski2011circumference} to $c
\approx 0.75$ and by  Liu \etal~\cite{liu2018circumference} to $c = 0.8$.
Together these result show that, up to lower-order factors, every cubic
$2$-connected graph contains a path of order $\log^2 n$ and we cannot expect
a longer path in general.

\section{Preliminaries}\label{sec:prelim}
For an integer $n$ we use the notation $[n]$ to denote the set
$\{1,2,\dots,n\}$, and for integers $i,j$ we use the notation $[i,j]$ to denote
the set $\{i,i+1,\dots,j\}$.

All graphs considered in this paper will be finite, undirected, and
loopless; but may contain parallel edges. For sets and set members we will use 
the  notation $X - v := X \setminus \{v\}$ and $X - Y := X \setminus Y$.
For vertex sets~$X \subseteq V(G)$ we will frequently use the notation
$\bar X := V(G)\setminus X$ for its complement.

For a graph~$G = (V,E)$ we use the shorthand $|G| = |V(G)|$ for the number of
vertices in the graph, and we write $uv \in G$ to denote that the edge
$\{u,v\}$ is contained in $E(G)$. A graph is \emph{cubic} if every vertex~$v
\in G$ has degree exactly three, and it is \emph{$k$-edge connected} if the
removal of any $k$ edges does not disconnect the graph. 

We define a \emph{cut} of a graph $G$ to be any bipartition $(X,\bar X)$ of
$V(G)$. The \emph{cutset} of a cut $(X,\bar X)$ is the set of edges with one
endpoint in~$X$ and the other in $\bar X$. The \emph{size} of a cut
is the size of its cutset.
We call the start and end-vertex of a path its \emph{endpoints} and all other
vertices on it the \emph{internal} vertices. For $u,v \in G$, an
\emph{$u$-$v$-path} in $G$ is a path with endpoints $u$ and $v$. A path
\emph{avoids} an edge (vertex) if the edge (vertex) is not a part of the path.

\section{Upper bound}%
In this section we define a family of graphs $H_\ell$ in
which every path has length $O(\log^2 |H_\ell|)$. Every graph in the family
consists of two isomorphic binary trees whose leaves are
identified, plus an edge between the roots of the trees and
some edges between their leaves, see Figure~\ref{fig:UB1}.
Formally, for every integer $\ell$, we define $H_\ell = (V_\ell,E_\ell)$ with 
vertex set
\[
  V_\ell = \big\{v_{h,j} \mid h\in [-\ell,\ell] \mbox{ and } j\in [2^{\ell-|h|}]\big\},
\]
and edge set $E_\ell$ which contains
\begin{enumerate}
	\item the edges $v_{h,j}v_{h-1,2j-1}$ and $v_{h,j}v_{h-1,2j}$, for every $h\in [\ell]$ and $j \in [2^{\ell-1}]$,
	\item the edges $v_{h,j}v_{h+1,2j-1}$ and $v_{h,j}v_{h+1,2j}$, for every $h\in [-\ell,-1]$ and $j \in [2^{\ell-1}]$,
	\item the edge $v_{\ell,1}v_{-\ell,1}$ between the roots of the trees, and 
	\item the edges $v_{0,2j-1}v_{0,2j}$, for every $j \in [2^{\ell-1}]$, 
        between leaves of the trees.
  % \item One of the trees - a balanced binary-tree $T^+$, of depth $\ell$, rooted at $v_{\ell,1}$, where 
  % \begin{itemize}
  % \item all the vertices $v_{0,j}$ such that $j\in [2^{\ell-|h|}]$ are the leaves,
  % \item every vertex $v_{h,j}$ such that $h \in \{1,2,\dots,\ell\}$ and $j\in [2^{\ell-|h|}]$ is an internal vertex whose children are $v_{h-1,2j-1}$ and $v_{h-1,2j}$, and 
  % \end{itemize}
  % \item The other tree - a balanced binary-tree $T^-$, of depth $\ell$, rooted at $v_{-\ell,1}$, that is isomorphic to $T^+$ by inverting the sign on the first index of the subscript of the vertices of $T^-$. 
\end{enumerate}

\begin{figure}
  \centering\includegraphics[width=.5\linewidth]{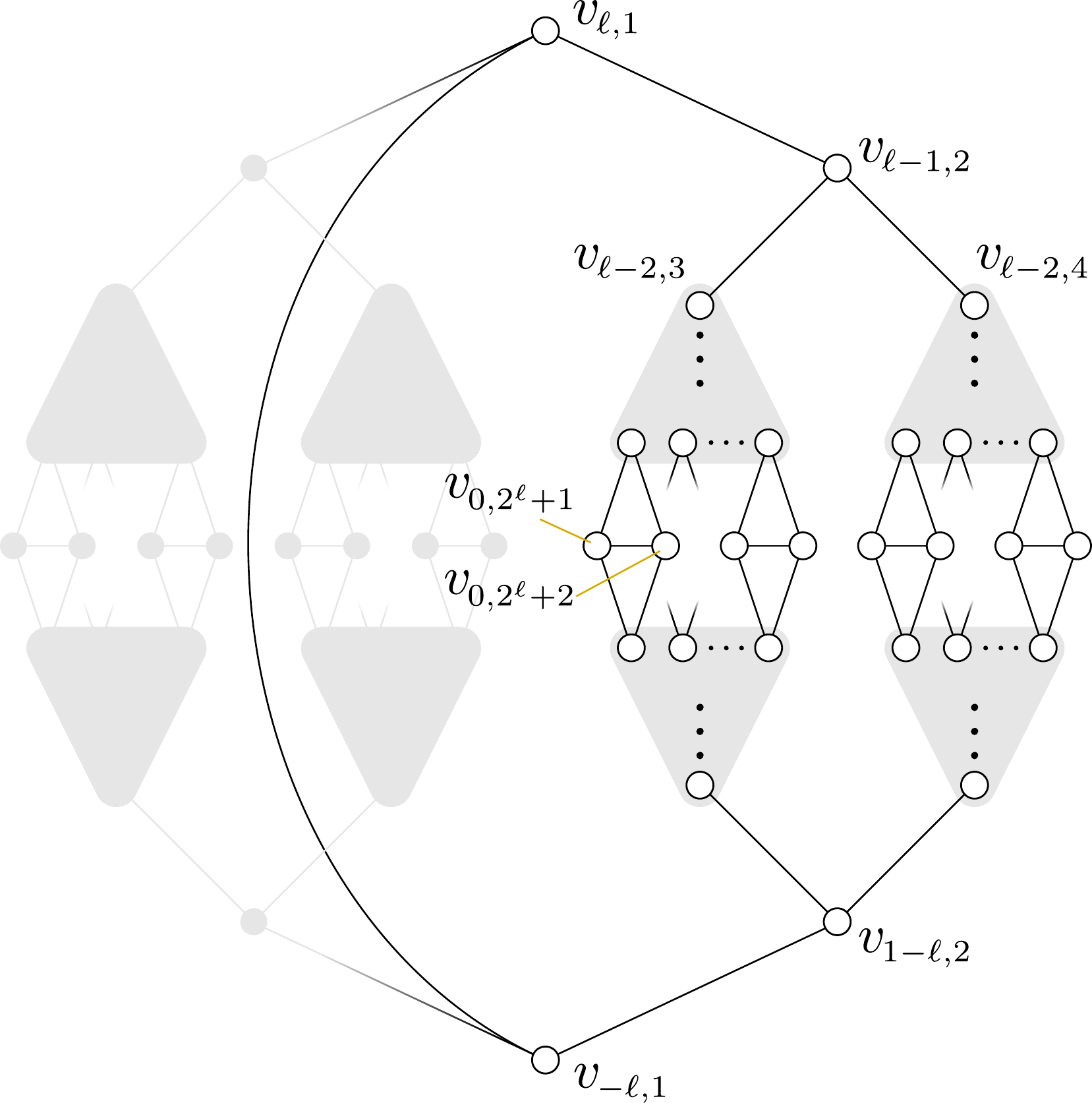}
  \caption{The graph $H_\ell$ with vertex exemplified vertex labels.}
  \label{fig:UB1}
\end{figure}

\noindent%
For every $h\in [\ell]$ and $j\in [2^{\ell-|h|}]$, we let 
$\Block{h}{j}$ be the induced subgraph of all the vertices in
the subtree of $T^+$ that is rooted in $v_{h,j}$ and  the subtree of $T^-$
that is rooted in $v_{-h,j}$. 
Formally, a $\Block{h}{j}$ is the induced subgraph of $G$ on the set of vertices $$\{v_{i,k}: i\in [-h,h], k\in [j2^{h-|i|},(j+1)2^{h-|i|}-1]\}.$$
\noindent
We call the subgraphs $\Block{h}{j}$
the \emph{blocks} of $H_\ell$.

\begin{lemma}\label{lem:G}
  $H_\ell$ is $3$-regular, planar, $2$-edge-connected and has $2^{\ell+1} + 2^{\ell} - 2$ vertices.
\end{lemma}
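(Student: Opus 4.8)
The plan is to check the four assertions directly from the definition of $H_\ell$, reading each index range in the definition of $E_\ell$ as ``all indices for which both endpoints are vertices'' (so that item~(1) effectively ranges over $j\in[2^{\ell-h}]$, etc.) and tacitly assuming $\ell\ge1$, the case $\ell=0$ being degenerate. The vertex count and $3$-regularity are pure bookkeeping; the content lies in planarity and $2$-edge-connectivity. For the count, distinct levels $h\in[-\ell,\ell]$ use disjoint index sets and level $h$ has $2^{\ell-|h|}$ vertices, so
\[
  |H_\ell| = \sum_{h=-\ell}^{\ell} 2^{\ell-|h|} = 2^{\ell} + 2\sum_{h=1}^{\ell} 2^{\ell-h} = 2^{\ell} + 2(2^{\ell}-1) = 2^{\ell+1} + 2^{\ell} - 2.
\]
For $3$-regularity I would partition $V_\ell$ into the two roots, the $2^{\ell}$ leaves $v_{0,j}$, and the internal tree nodes, and list incident edges in each case: an internal $T^+$-node $v_{h,j}$ (with $1\le h\le\ell-1$) has two type-(1) child edges and one type-(1) parent edge; a root has two type-(1) child edges and the type-(3) edge; a leaf $v_{0,j}$ has its $T^+$-parent edge (type~(1)), its $T^-$-parent edge (type~(2)), and exactly one type-(4) edge (pairing it with $v_{0,j-1}$ or $v_{0,j+1}$ according to the parity of $j$); and the $T^-$-nodes mirror the $T^+$-nodes under $h\mapsto-h$. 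In each case the degree is exactly $3$.

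For planarity I would make Figure~\ref{fig:UB1} precise. Place the leaves $v_{0,1},\dots,v_{0,2^{\ell}}$ left-to-right on the $x$-axis, draw $T^+$ as a standard plane binary tree in the closed upper half-plane and $T^-$ as its mirror image in the lower half-plane; since the two leaf orders coincide, this is a crossing-free drawing of $T^+\cup T^-$. Each leaf pair $v_{0,2j-1},v_{0,2j}$ has the common $T^+$-parent $v_{1,j}$, hence bounds a (face-interior) region of the drawing together with the $x$-axis segment between the two leaves; routing the type-(4) edge $v_{0,2j-1}v_{0,2j}$ through that region adds all of item~(4) without crossings. Finally both roots lie on the outer face -- its boundary runs up the left spine of $T^+$ to $v_{\ell,1}$ and down the left spine of $T^-$ to $v_{-\ell,1}$ -- so the type-(3) edge can be drawn in the outer face, yielding a plane embedding of $H_\ell$.

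For $2$-edge-connectivity, $H_\ell$ is plainly connected, so it suffices to show no edge is a bridge, that is, that every edge lies on a cycle, which I would do by edge type. A type-(4) edge $v_{0,2j-1}v_{0,2j}$ lies on the triangle it spans with $v_{1,j}$. The type-(3) edge lies on the cycle formed by it together with a descending path in $T^-$ from $v_{-\ell,1}$ to an arbitrary leaf $u$ and an ascending path in $T^+$ from $u$ to $v_{\ell,1}$; these two paths meet only at $u$, their other vertices having negative and positive level respectively. For a tree edge $e=v_{h,j}v_{h-1,k}$ (say of type~(1)) pick a leaf $u$ descending from $v_{h-1,k}$ and concatenate: $e$; the descending $T^+$-path from $v_{h-1,k}$ to $u$; the descending $T^-$-path from $u$ to $v_{-\ell,1}$; the type-(3) edge; and the descending $T^+$-path from $v_{\ell,1}$ to $v_{h,j}$. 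The three tree subpaths occupy levels $[0,h-1]$, the negative levels, and $[h,\ell]$ respectively, so they are internally disjoint and meet only at the intended junctions, making this closed walk a genuine cycle through $e$; type-(2) edges are symmetric.

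I expect the main (if mild) obstacle to be exactly this last disjointness bookkeeping -- in particular its degenerate instances $h=1$ and $h=\ell$, in which the first or the last subpath collapses to a single vertex -- so I would organise it around the level map $v_{i,m}\mapsto i$, under which every non-overlap claim above reduces to a trivial inequality between ranges of levels. Making the planar embedding fully rigorous is similarly fiddly but poses no genuine difficulty.
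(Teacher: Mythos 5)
Your proof is correct, and the bookkeeping parts (vertex count, degree count, plane embedding) match the paper in substance while being spelled out in more detail -- the paper dismisses $3$-regularity ``by construction'' and handles planarity by pointing at Figure~\ref{fig:UB1}. You are also right that the index ranges in items~(1)--(2) of $E_\ell$ need to be read as $j\in[2^{\ell-h}]$ rather than $j\in[2^{\ell-1}]$ for the edges to reference existing vertices; the paper does not comment on this.

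The genuine divergence is in the $2$-edge-connectivity argument. The paper shows that \emph{every pair of vertices} $u,v$ lies on a common cycle: take the least common ancestor $w$ of $u,v$ in one tree, extend two $w$-paths through $u$ and $v$ down to leaves, and close the cycle by mirroring both paths into the other tree (handling the cross-tree case by mirroring one of the two vertices). By Whitney's characterization this in fact establishes $2$-\emph{vertex}-connectivity, which is stronger than what the lemma asserts. You instead prove exactly the stated property: $H_\ell$ is connected and no edge is a bridge, via a case analysis on edge type, with the tree-edge case closed up through the root edge and a leaf. Your case analysis is sound, including the degenerate $h\in\{1,\ell\}$ subcases, and your level-map bookkeeping correctly certifies that the subpaths are internally disjoint. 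The trade-off: the paper's argument is shorter (one uniform cycle construction instead of three edge types) and yields a stronger conclusion; yours is more mechanical but proves no more than the lemma claims.
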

\begin{proof}
	By construction, $H_\ell$ is $3$-regular and, for every $h\in [-\ell,\ell]$, there are $2^{\ell-|h|}$ labelled vertices $v_{h,j}$. 
	Thus, the cardinality of $V$ is indeed $2^{\ell+1} + 2^{\ell} - 2$.

	We prove that $H_\ell$ is $2$-edge-connected by showing that for any pair of
  vertices $u$ and $v$ of $H_\ell$, there exists a cycle in $H_\ell$ that includes
	both $u$ and $v$.
  First assume that $u$ and $v$ lie inside the same tree, let's say $T^+$. 
  Let $w$ be the least common ancestor of $u$ and $v$ in $T^+$. Choose two
  paths $P_u$, $P_v$ starting both at $w$ and going through $u$ and $v$, 
  respectively, down towards some leaves of $T^+$. If we mirror $P_u$ and $P_v$
  into $T^-$ we obtain a cycle that contains both $u$ and $v$. 

  Now assume that $u$ and $v$ lie in different trees, say $u \in T^+$ and 
  $v \in T^-$. Let $v'$ be the mirror image of $v$ in $T^-$
  (so if $v = v_{-h,j}$ then $v' = v_{h,j}$). We construct the same cycle
  as in the previous case and simply note that it contains $v'$ and 
  therefore, by construction, also $v$.
  We remark that this also works when $v' = u$.

  A planar embedding is exemplified in Figure~\ref{fig:UB1} if we move
  the edge $v_{\ell,1}v_{-\ell,1}$ either into the middle of the figure
  or route it through the outer face.
\end{proof}

\noindent
To analyse the length of paths in $H_\ell$ we introduce the following terminology. Let $\Path$ be a path of $H_\ell$. 
We classify every vertex $v_{h,k}$ of $\Path$ as follows:
\begin{itemize}
  \item a \emph{\oddd} if $\degree{\Path}{v_{h,k}}$ is odd,
  \item a \emph{\leaf} if $h=0$ and $\degree{\Path}{v_{h,k}}$ is even
  \item a \emph{\turn} if $h\neq 0$, $\degree{\Path}{v_{h,k}}=2$ and in $\Path$ it is adjacent to both vertices $v_{h',2j-1}$, $v_{h',2j}$, where $|h'| = |h|-1$,
  \item a \emph{\way} if $h\neq 0$, $\degree{\Path}{v_{h,k}} =2$ and it is not a \turn.
\end{itemize}
We write just \oddd, \turn or \way, when the relevant path is clear from context.  
See Figure~\ref{fig:UB2} for a depiction of the vertex types.

\begin{figure}
  \centering\includegraphics[width=.6\linewidth]{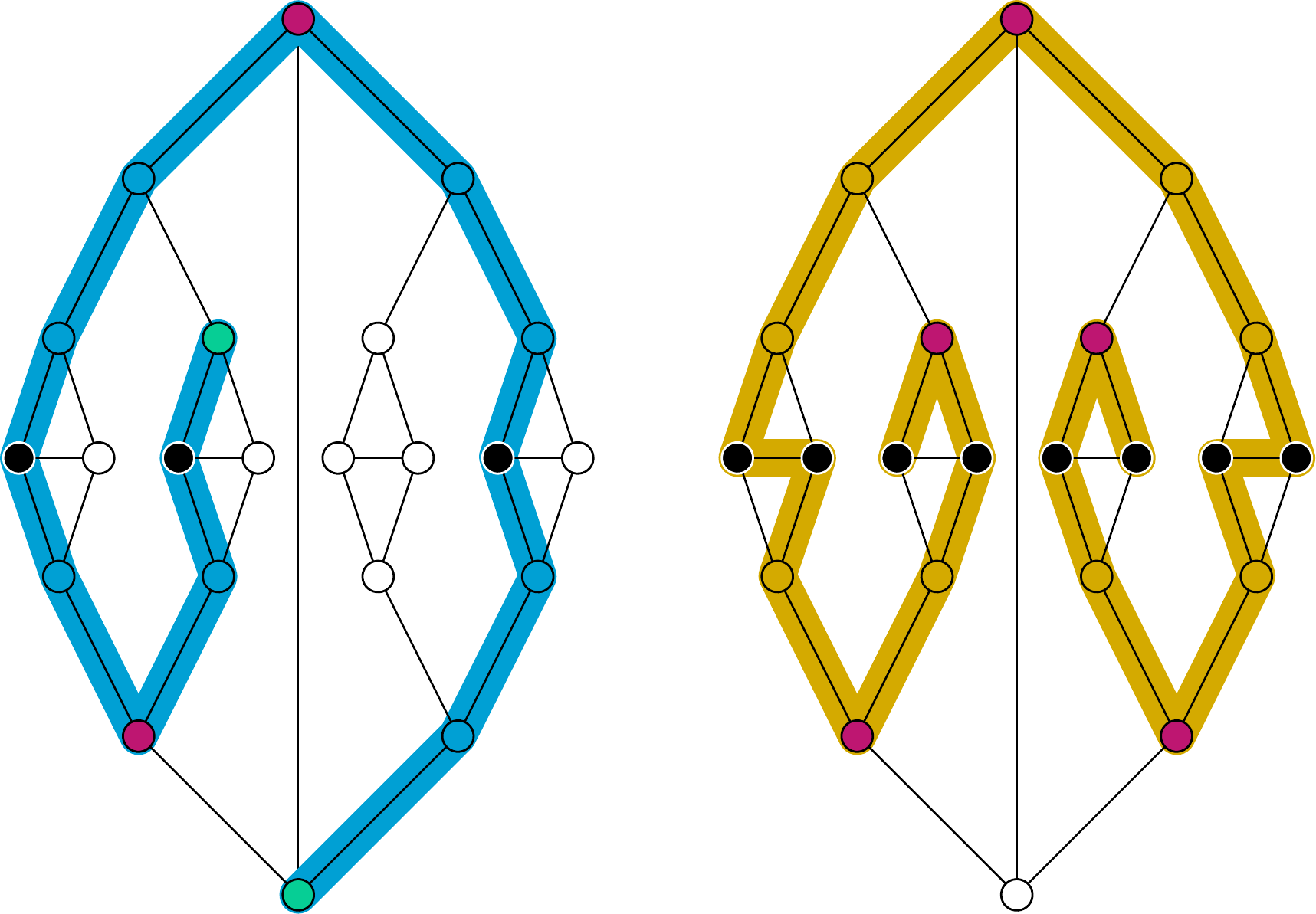}
  \caption{\label{fig:UB2}%
  A sub-optimal (left) and optimal (right) path in $H_3$ with vertices coloured
  by type: \leaves are black, \oddds green, \turns magenta, and \ways have the path's colour.}
\end{figure}

\begin{observation}\label{obs:boundaries}
  If $\Path$ is a path of $H_\ell$, then every one of its vertices is either a \leaf, a \oddd, a \turn, or a \way, and  
  the number of \oddd vertices is $0$ or $2$.
\end{observation}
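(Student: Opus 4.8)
The plan is simply to unpack the definitions; the only outside input needed is that a path visits each of its vertices exactly once, so every vertex of $\Path$ is incident with at most two edges of $\Path$. I would therefore begin by recording the elementary fact that for every vertex $v$ of $\Path$ we have $\degree{\Path}{v}\in\{1,2\}$, with $\degree{\Path}{v}=1$ precisely when $v$ is an endpoint of $\Path$ and $\degree{\Path}{v}=2$ precisely when $v$ is an internal vertex of $\Path$. In particular, $\degree{\Path}{v}$ is odd if and only if it equals $1$, if and only if $v$ is an endpoint of $\Path$.

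For the first part of the statement, I would fix a vertex $v=v_{h,k}$ of $\Path$ and split into cases. If $\degree{\Path}{v}$ is odd, then $v$ is a \oddd by definition. Otherwise $\degree{\Path}{v}=2$ by the fact above; if $h=0$ then $v$ is a \leaf by definition, and if $h\neq 0$ then $v$ is a \turn when $\Path$ contains both edges from $v$ to its two neighbours at level $|h|-1$, and is a \way otherwise --- this dichotomy is forced, since \way is by definition exactly ``path-degree $2$, level different from $0$, and not a \turn''. Hence every vertex of $\Path$ has one of the four types. I would also note in passing (it is not required by the statement) that the four types are pairwise disjoint: \oddds have odd path-degree whereas the other three types have even path-degree, \leaves lie at level $0$ whereas \turns and \ways do not, and \turns and \ways are separated by the adjacency condition; so the classification is in fact a partition of the vertex set of $\Path$.

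For the second part, the opening fact already identifies the \oddd vertices of $\Path$ with its path-degree-$1$ vertices, that is, with its endpoints. A path with at least one edge has exactly two endpoints --- its two distinct ends, each of path-degree $1$, with every other vertex of path-degree $2$ --- while a path with no edges has no vertex of odd path-degree at all. Therefore $\Path$ has $0$ or $2$ \oddd vertices, which is the claim.

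I do not expect a genuine obstacle here: the statement is pure bookkeeping from the definitions. The only place that needs a moment's care is the exhaustiveness in the first part, which hinges on reading the definition of \way as the complement of \turn among the path-degree-$2$ vertices that do not lie at level $0$; once this is made explicit, nothing remains to check.
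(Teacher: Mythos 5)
Your proof is correct and is essentially the paper's argument spelled out: the paper compresses everything into the one-line remark that $\Path$, viewed as a standalone graph, is Eulerian (in the sense of admitting an Eulerian trail, hence having exactly $0$ or $2$ odd-degree vertices, all of degree at most $2$), and you unpack precisely that fact by noting directly that a path's vertices have degree $1$ or $2$, with the degree-$1$ vertices being its $0$ or $2$ endpoints and the remaining cases falling into \emph{leaf}, \emph{turn}, or \emph{way} by definition.
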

\begin{proof}
  The observation follows from the fact that, as a standalone graph, $\Path$ is Eulerian.
\end{proof}

\noindent
For the remainder of this section let $\Path$ be an arbitrary path in $H_\ell$.
Recall that our goal is to upper bound the number of edges in $\Path$.
To do so we show that every edge in $\Path$ which is not between \leaf vertices 
can be charged to a specific \emph{segment} of $R$. A  segment is a minimum sub-path of $R$ with one endpoint being either at a \turn or a \oddd and the other either a \oddd or a \leaf. 
We further on bound the number of segments in $R$.
This is sufficient  for bounding the path's length because the number of edges between \leaves is bounded from above by the number of such segments.

We note that since there are at most two \oddd vertices each of degree $3$, the set of specific segments contains at most six paths with one of their ends being a \oddd vertex. The length of every one of these paths is bounded above by $O(\ell)$.
Thus, for our bound we need to focus on the paths between a \turn vertex and a \leaf. 
We note that every \turn vertex $v_{h,k}$, is an endpoint of at most two segments. 
Consequently, to prove this section's main result, we show that $\Path$ has at most two \turn vertices with the same first index.

\begin{lemma}\label{lem:turns}
Let $\Path$ be a path of $H$ and $h\in [-\ell,\ell]\setminus\{0\}$.
There exist at most $2$ \turn vertices $v_{h',j}$ of $\Path$ such that $|h'| = h$.
\end{lemma}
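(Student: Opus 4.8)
The plan is to fix a level $h$ (with $|h|=h$, $h\neq 0$) and argue about the structure of $\Path$ on the two ``cuts'' formed by the edges between level $h$ and level $h-1$ in $T^+$, and symmetrically between $-h$ and $-(h-1)$ in $T^-$. Recall that a \turn vertex $v_{h',j}$ with $|h'|=h$ uses in $\Path$ exactly the two edges going one level closer to the leaves, i.e. to $v_{h'',2j-1}$ and $v_{h'',2j}$ with $|h''|=h-1$; in particular it uses none of its edges going toward the root. So each \turn at level $h$ ``consumes'' two of the cut edges between levels $h$ and $h-1$ (on the appropriate side), and — crucially — does so in a nested, laminar fashion, because the vertex sets hanging below distinct level-$h$ vertices are disjoint subtrees.

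First I would set up the combined cut. Let $F^+_h$ be the set of edges of $H_\ell$ with one endpoint $v_{h,j}$ at level $h$ and the other at level $h-1$, and $F^-_h$ the analogous set on the $T^-$ side; note $|F^+_h| = |F^-_h| = 2\cdot 2^{\ell-h}$. The key observation is that $\Path$, being a single path, meets \emph{any} edge cut in an even number of edges, but more importantly: if we delete from $H_\ell$ all of $F^+_h \cup F^-_h$, the graph falls apart into the ``outer'' piece (everything at levels of absolute value $>h$, which is connected through the root edge) and a collection of $2^{\ell-h}$ disjoint ``\blocks'' $\Block{h}{j}$ (everything at levels of absolute value $\le h$ below a fixed pair $v_{h,j}, v_{-h,j}$). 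A \turn vertex $v_{h,j}$ at level exactly $h$ lies in the block $\Block{h}{j}$ and spends both of its path-edges going \emph{down} into that block, so it does not route the path out of the block at level $h$; whereas if $\Path$ is to pass from the outer piece into $\Block{h}{j}$ and back, it must use at least two edges of $F^+_h \cup F^-_h$ incident to $v_{h,j}$ or $v_{-h,j}$.

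Next I would count. For each $j$, let $a_j \in \{0,1,2\}$ be the number of \turn vertices among the two level-$h$ vertices $v_{h,j}$, $v_{-h,j}$, and consider how $\Path$ interacts with $\Block{h}{j}$. Since $\Path$ is connected, the set of blocks it actually visits, ordered along $\Path$, forms a single interval-like structure: intuitively $\Path$ enters the union of blocks, possibly hops between a few of them via the outer tree, and leaves. I would make this precise by contracting each block $\Block{h}{j}$ to a single vertex and contracting the outer piece to a single vertex; $\Path$ projects onto a walk in this tiny contracted multigraph, and the number of blocks through which $\Path$ ``passes'' (enters and exits, each costing $\ge 2$ cut-edges, total $\ge 4$ except for the at-most-two blocks containing an endpoint of $\Path$) is bounded because all these cut edges are incident to the $2^{\ell-h}$ vertices $\{v_{\pm h,j}\}$, each of degree $3$. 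A \turn at level $h$ in block $j$ forces $v_{h,j}$ (resp. $v_{-h,j}$) to have both remaining edges inside the block, leaving zero for crossing, so a block that contains a \turn can be crossed at most through the \emph{other} of its two top vertices — at most $2$ crossing edges, i.e.\ the path can enter such a block but essentially cannot also re-enter it, which is what pins the total number of \turns at level $h$ to $2$.

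The main obstacle, and the step I would spend the most care on, is the bookkeeping in that last paragraph: turning ``$\Path$ is one path, hence visits the blocks in an essentially linear order'' into a clean inequality. I would handle it by the contracted-multigraph argument above: after contracting, $\Path$ becomes a walk (with at most two endpoints not at the ``outer'' vertex), and a standard parity/degree count shows that the number of distinct block-vertices it visits with ``full'' crossing (degree $\ge 4$ in the contracted walk) is at most one, while the blocks visited at all is at most two more (the ones holding $\Path$'s endpoints); since a \turn at level $h$ in a block prevents that block from being one of the fully-crossed ones (it can supply at most $2$ crossing edges, via a single top vertex), every \turn at level $h$ must sit in one of the at-most-two endpoint-blocks, and each such block contains at most one level-$h$ \turn (its other top vertex then has only $1$ free edge, too few to be a \turn, which needs $2$). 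This yields at most $2$ \turn vertices $v_{h',j}$ with $|h'|=h$, as claimed.
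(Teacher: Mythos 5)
Your overall plan---decompose $H_\ell$ along the cut at level $h$, observe that a \turn at level $h$ leaves part of that cut unused, and then limit via a parity count how many \blocks $\Path$ can enter---is the same mechanism the paper uses (the paper phrases the parity count as ``each \block containing such a \turn must contribute its own \oddd vertex, and $\Path$ has at most two'', via Observation~\ref{obs:boundaries}). However, the concrete bookkeeping in your proposal has genuine errors. The cut is off by one: after removing $F^+_h\cup F^-_h$ (the edges between absolute levels $h$ and $h-1$), the vertices $v_{h,j}$ and $v_{-h,j}$ remain attached to their parents at absolute level $h+1$ and hence lie on the \emph{outer} side of the cut, so the paper's $\Block{h}{j}$ is not a connected component of what remains. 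The cut that actually isolates $\Block{h}{j}$ is the one between absolute levels $h+1$ and $h$, and that cut contributes exactly \emph{two} edges per \block (the unique upward edge at $v_{h,j}$ and the unique downward edge at $v_{-h,j}$), not four.

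This undermines your closing case analysis: after contracting each \block and the outer piece, no block-vertex can ever attain degree four in the projected walk, since each \block supplies only two cut edges; the distinction between ``fully crossed'' blocks of degree $\geq 4$ and endpoint blocks therefore rests on a picture that does not match $H_\ell$. The side remark that ``$\Path$ meets any edge cut in an even number of edges'' is also false---a path crosses a cut an odd number of times precisely when its endpoints lie on opposite sides---and that parity distinction is in fact the engine of the argument. The repaired count along your lines is: a level-$h$ \turn in $\Block{h}{j}$ forces one of the \block's two cut edges to be unused, so that \block sees $0$ or $1$ crossings; a \block with exactly $1$ crossing contains exactly one endpoint of $\Path$, so there are at most two of these; and a \block with $0$ crossings that still meets $\Path$ must contain all of $\Path$, hence is the unique such \block and holds at most two level-$h$ vertices. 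Either way the count is at most $2$, recovering the paper's conclusion.
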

\begin{proof}
Assume for the sake of contradiction that there exist three distinct  \turn vertices $v_{h_1,j_1}$,$v_{h_2,j_2}$ and $v_{h_3,j_3}$ such that $|h_i|=h$ and $j_i\in [2^{\ell-h}]$, for every $i\in [3]$.

If any two of $j_1,j_2$ and $j_3$ are both equal to some $j$,  then two of
the three vertices are labelled $v_{h,j}$ and $v_{-h,j}$. Since $v_{h,j}$ and $v_{-h,j}$ are \turn vertices, all the edges of $\Path$ are contained in $\Block{h}{j}$,
because the only edges between vertices in $\Block{h}{j}$ and the vertices not in $\Block{h}{j}$ are adjacent to $v_{h,j}$ and $v_{-h,j}$ and are not in $R$, since $v_{h,j}$ and $v_{-h,j}$ are \turn vertices. 
Therefore, there can be no other \turn vertex whose first index has an absolute value of $h$,
contradicting our assumption on $v_{h_1,j_1}$,$v_{h_2,j_2}$ and $v_{h_3,j_3}$.

Now assume that each of the vertices $v_{h_1,j_1}$,$v_{h_2,j_2}$ and
$v_{h_3,j_3}$ are contained in a distinct \block, e.g. all three indices
$j_1,j_2,j_3$ are distinct. We next prove that this implies that every one of
these \blocks has a \oddd vertex, contradicting
Observation~\ref{obs:boundaries}.

Fix $i\in [3]$. We show that $\Block{h_i}{j_i}$ has a \oddd vertex. Assume
towards a contradiction it does not. Thus, in particular, neither one of $v_{h_i,j_i}$
and $v_{-h_i,j_i}$ is a \oddd vertex. Since $\Path$ is connected,
one of $v_{h_i,j_i}$ and $v_{-h_i,j_i}$ is a \way\ and the other is a \oddd
vertex or a way. Let $\Path'$ be subgraph consisting of the edges common to $\Path$
and $\Block{h_i}{j_i}$.  We observe that $v_{-h,j}$ is an vertex of $\Path'$ and
has degree $1$ in $\Path'$. Thus, by Observation~\ref{obs:boundaries},
$\Path'$ has another vertex of odd degree in $\Path'$. Since this vertex is
different from $v_{h_i,j_i}$ and $v_{-h_i,j_i}$, it also has odd degree in
$\Path$. Thus, $\Block{h_i}{j_i}$ has a \oddd vertex in contradiction to the
assumption that it does not. 

Hence our assumption on the existence of $v_{h_1,j_1}$,$v_{h_2,j_2}$ and
$v_{h_3,j_3}$ implies that $R$ has three boundary vertices, contradicting
Observation~\ref{obs:boundaries}. This proves the claim.
\end{proof}
\begin{theorem}\label{thm:upperBound}
	Every path in $H_\ell$ is of length at most $\upperbound$.  
\end{theorem}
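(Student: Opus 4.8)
The plan is to bound the number of edges in $\Path$ by charging each edge to a structural feature of the path, following the roadmap already laid out in the text preceding the theorem. First I would recall from Observation~\ref{obs:boundaries} that every vertex of $\Path$ is a \leaf, a \oddd, a \turn, or a \way, and that there are at most two \oddd vertices. I would then formalise the notion of a \emph{segment}: a minimal sub-path of $\Path$ whose two endpoints are each either a \turn, a \oddd, or a \leaf, and whose internal vertices are therefore all \ways. Since every vertex of $\Path$ that is neither a \leaf nor lies strictly inside a segment is a \turn or \oddd, and the segments partition the edge set of $\Path$ together with the (at most a few) edges incident to \leaves that are not between two \leaves — actually it is cleaner to say: every edge of $\Path$ that is not an edge between two \leaf vertices lies on some segment, because walking along $\Path$ the only places we can be forced to "stop" a segment are at \turns, \oddds, or \leaves. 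So $|E(\Path)|$ is at most (number of edges on segments) $+$ (number of edges between two \leaves).

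Next I would bound the number of segments. Each segment has at least one endpoint that is a \turn or a \oddd (it cannot have both endpoints being \leaves and nothing in between, since consecutive \leaves are joined directly and that single edge is counted separately; and a segment of length one between two \leaves is exactly such an edge). The \oddd vertices number at most two and have degree at most three, so they are endpoints of at most six segments total, and each such segment has length $O(\ell)$ since any path in a single tree $T^+$ or $T^-$ plus one crossing has length $O(\ell)$ — more carefully, a segment with a \oddd endpoint, followed to its other end, has length at most the diameter-type bound $O(\ell)$, contributing $O(\ell)$ edges and $O(1)$ segments. For the remaining segments, each has at least one \turn endpoint; by Lemma~\ref{lem:turns}, for each $h \in [\ell]$ there are at most two \turn vertices with $|h'|=h$, hence at most $2\ell$ \turn vertices in total (ranging over $h \in [\ell]$, since $h \neq 0$ and $|h'|=h$ gives two indices $\pm h$ but the lemma already counts both). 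Each \turn vertex has degree exactly two in $\Path$, so it is an endpoint of at most two segments, giving at most $O(\ell)$ segments with a \turn endpoint. Therefore the total number of segments is $O(\ell)$.

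Now I need to bound the \emph{length} of each segment whose endpoints are \turns and/or \leaves. The key geometric fact is that between two consecutive \turns (or a \turn and a \leaf, or two \leaves) the path consists only of \way vertices, and a \way vertex at height $h \neq 0$ with degree two is adjacent either to two vertices at heights $\{h-1,h+1\}$ (passing "through") or to a parent and one child on the same side — the point is that along a segment the height coordinate behaves monotonically in a controlled way, so a segment of \ways has length $O(\ell)$: it can descend from some height down to the leaves and back up at most a bounded number of times without hitting a \turn. Actually the cleanest argument: a maximal run of \way vertices traces a path that is "monotone" in the tree structure in the sense that it never uses both children of a vertex (that would be a \turn), so projected onto either tree it is a path that goes monotonically towards the leaves or towards the root between consecutive direction changes, and each such monotone piece has length $\leq \ell$; bounding the number of direction changes within a single segment by $O(1)$ then gives segment length $O(\ell)$. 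Combining: $|E(\Path)| \leq (\text{number of segments}) \cdot O(\ell) + (\text{number of edges between \leaves})$, and the number of \leaf-\leaf edges is itself at most one more than the number of segments (they alternate with segments along $\Path$), so $|E(\Path)| = O(\ell^2)$, and a careful accounting yields exactly $2\ell^2 + 5\ell + 7$.

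The main obstacle I expect is the segment-length bound: making precise why a maximal run of \way vertices cannot be long, i.e. why the height coordinate cannot oscillate many times within one segment. One must rule out a "staircase" pattern of \ways that climbs and descends repeatedly; the resolution should be that each local minimum of the height within a \way-run forces a \turn or a \leaf (a \way at height $h$ adjacent to two vertices at height $h-1$ or to a child twice is precisely a \turn), contradicting minimality of the segment — so in fact a segment's interior \ways are height-monotone on each side of at most one \leaf, giving length at most $2\ell + O(1)$. Pinning down this case analysis on the four vertex types and the explicit edge set of $H_\ell$ is the technical heart; the rest is the bookkeeping that turns $O(\ell) \cdot O(\ell)$ into the stated constant $2\ell^2 + 5\ell + 7$.
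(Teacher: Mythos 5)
Your approach matches the paper's almost exactly: decompose $\Path$ into segments terminating at \turn, \oddd, or \leaf vertices, use Lemma~\ref{lem:turns} to bound the number of \turns (hence of segments) by $O(\ell)$, and argue that each segment is short because a maximal run of \way vertices is height-monotone. That is precisely the paper's strategy, and the paper likewise does the final tally as (segment count) times (segment length) plus small corrections.

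One detail would fail as written. You assert that every segment of length greater than one has a \turn or \oddd endpoint, so that leaf-to-leaf segments are just single edges. That misses the segment that crosses the root edge $v_{\ell,1}v_{-\ell,1}$: a path can run from a \leaf monotonically up $T^+$ to $v_{\ell,1}$, across to $v_{-\ell,1}$, and monotonically down $T^-$ to another \leaf, and both $v_{\pm\ell,1}$ are \ways here (each is adjacent to only one of its children plus the root edge, so the \turn condition is not met), giving a leaf-to-leaf segment of length up to $2\ell+1$ containing no \turn or \oddd. The paper treats this as a separate term in the final sum. Relatedly, your justification of monotonicity has the extrema swapped: a local \emph{maximum} of $|h|$ inside a \way-run is a vertex adjacent to two lower-height vertices, which is exactly the \turn condition; a local \emph{minimum} at $0<|h|<\ell$ is impossible outright because such a vertex has a unique parent; and a minimum at $|h|=0$ is a \leaf. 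You wrote "local minimum" but then gave the condition for a local maximum. Neither issue is fatal — the root-crossing segment is unique and fits as one $O(\ell)$ term, and the monotonicity case split just needs to be stated the right way round — but as written the segment-length argument has a real hole where the root edge is concerned.
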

\begin{proof}
  Fix some path $R$ in $H_\ell$, we show that it has length at most $\upperbound$.
	With every \turn $v_{h,j}$ we associate segments in $\Path$,
	both between $v_{h,j}$ and either a \oddd vertex or a \leaf vertex. One of
	the segments contains the vertex $v_{h',2j-1}$ and the other
	$v_{h',2j}$, where $|h'| = |h|-1$ and has the same sign as $h$. Every 
	internal vertex in either segment is a \way vertex, so the length of
	both paths is bounded above by $h$. Thus, by Lemma~\ref{lem:turns}, the
	total sum of segment-lengths associated with every \turn vertex on $\Path$
	is bounded above by $2\ell(\ell-1)$.

	Edges of $\Path$ may reside on a segment that includes both $v_{\ell,1}$ and
	$v_{-\ell,1}$. Such segment is either between two \oddd vertices, or a \oddd
	vertex and a \leaf vertex. The length of such a path is bounded above by
	$2\ell+1$.

	All the edges which are not part of the segments above are as follows: (i)
	adjacent to $2$ \leaf vertices, or (ii) in paths that do not include
	$v_{\ell,1}$ and $v_{-\ell,1}$ and are between a \oddd vertex and either a
	\oddd vertex or a \leaf vertex.

	There are at most $4\ell$ edges as in (ii). Every edge as described in (i) is
	adjacent to at most $2$ distinct segments, each one between a \oddd or \turn vertex and a \leaf vertex.
	By Observation~\ref{obs:boundaries} and Lemma~\ref{lem:turns}, there are at
	most $4\ell+6$ such paths. Thus, over all the number of edges as in (i) is
	bounded above by $4\ell+6$. Consequently, the number of edges in $\Path$ is
	bounded above by $2\ell(\ell-1) + 2\ell+1 + 4\ell + 6=\upperbound$.
\end{proof}

\newcommand{\lengthBiasOrder}{\log^2\log n}
\newcommand{\lengthBias}{36\lengthBiasOrder}
\newcommand{\pathLengthDenom}[1]{1+4\log\log \parenthesis{#1}}
\newcommand{\pathLengthWB}[1]{\frac{\log^2\parenthesis{#1}}{\pathLengthDenom{#1}}}
\newcommand{\pathLength}[1]{\pathLengthWB{#1} - \lengthBias}
\newcommand{\numOfVerticesInH}{\log^6{n}}
\newcommand{\heavyPE}{1-\frac{1}{\log{k}}}
\newcommand{\twoCycle}[1]{2\log #1 - 2\log\log #1 -8}

\newcommand{\U}{\mathbf U}

\section{Lower bound}
We prove here that every $2$-connected cubic graph $G$ has a path of length
$\pathLength{|V(G)|}$. For the simplicity of the presentation we have not 
optimized the constants involved. We will need the following two
known lemmas.

\begin{lemma}[Based on Bondy~\cite{bondy1980twoconnected}]\label{lem:bondy1980twoconnected}
	Every $2$-connected graph $H$ with maximum degree $3$ has a cycle of
	length at least $4\log|H| -4\log\log|H| - 20$.
\end{lemma}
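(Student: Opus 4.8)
The plan is to obtain this from the Bondy--Entringer theorem, of which it is the maximum-degree-$3$ case with the lower-order terms made explicit. Bondy and Entringer show that a $2$-connected graph of maximum degree $\Delta\geq3$ has a cycle of length at least $4\log_{\Delta-1}|H|$ up to an additive constant, which for $\Delta=3$ reads $4\log|H|-c_0$ for an absolute constant $c_0$ (throughout, $\log=\log_2$). Two easy observations then finish the lemma. First, their argument uses only the degree bound $\Delta(H)\leq3$ and not regularity, so it applies verbatim here; and if $\Delta(H)\leq2$ then $H$ is a cycle and there is nothing to prove. Second, the passage from $4\log|H|-c_0$ to $4\log|H|-4\log\log|H|-20$ is immediate: once $4\log\log|H|+20\geq c_0$ the latter is the weaker bound, and for the finitely many smaller values of $|H|$ its right-hand side is below $3$, hence implied by the mere existence of a cycle in a $2$-connected graph.

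If instead one wants a self-contained argument, I would reconstruct the proof as follows. Fix a vertex $r$ of minimum eccentricity and grow a breadth-first-search tree of $H$ with layers $L_0=\{r\},L_1,\dots,L_\rho$, where $\rho$ is the radius of $H$. Since $r$ has at most three neighbours and every other vertex at most two neighbours in the next layer, $|L_1|\leq3$ and $|L_{i+1}|\leq2|L_i|$, so $|H|\leq3\cdot2^{\rho}-2$ and hence $\rho\geq\log|H|-2$. It then remains to manufacture, out of this layered structure, a cycle of length roughly $4\rho$: pick two near-deepest vertices, extract two internally disjoint paths between them from $2$-connectivity via Menger's theorem, and argue that each of these paths is forced to descend to the bottom layer and climb back up, so that each has length about $2\rho$; their union is a cycle of length about $4\rho\geq4\log|H|-O(1)$.

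The step I expect to be the genuine obstacle is the last one: keeping the two paths internally disjoint while each stays long, i.e.\ ruling out that the second path short-circuits through the upper layers instead of descending once more. This is precisely where $2$-connectivity must be used in earnest --- after deleting the first path, every subtree hanging off it must still reach the rest of the graph, and one has to show that these re-attachments cannot collectively provide a short alternative route. It is also the natural place for a lower-order loss of order $\log\rho=O(\log\log|H|)$ to enter, measuring how far down the two branches must be separated before the short-circuit is excluded; the radius estimate and the summation of path lengths are then routine.
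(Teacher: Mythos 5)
The paper gives no proof of this lemma; it is cited with the attribution ``Based on Bondy'' and is essentially the $\Delta=3$ case of Bondy and Entringer's theorem, which asserts that every $2$-connected graph of order $n$ with maximum degree $d$ contains a cycle of length at least $4\log_{d-1} n - 4\log_{d-1}\log_{d-1} n - 20$. Setting $d=3$ (so all logarithms are base $2$) yields the stated bound directly, with no further derivation required. Your reduction therefore rests on a misreading of the cited result: Bondy and Entringer's bound is not $4\log|H|-c_0$ for an absolute constant $c_0$; the $-4\log\log|H|$ term and the constant $-20$ are already part of their statement. Had the stronger $4\log|H|-c_0$ bound been true, your weakening argument (compare to $4\log\log|H|+20$; handle small $|H|$ by the trivial existence of a cycle) would have been fine, so the slip does not invalidate the conclusion, but it does mean your first paragraph is doing no work --- the lemma is a near-verbatim citation, not a corollary.

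Your self-contained BFS sketch is, as you acknowledge, incomplete at precisely the step that matters. Since the paper supplies no proof of its own, there is nothing to compare the sketch against; I will only note that the actual Bondy--Entringer argument is not the naive ``pick two near-deepest BFS vertices and apply Menger'' plan, and the $\log\log|H|$ loss arises roughly where you guess, in controlling how a second long path may reattach to the first without short-circuiting through upper layers. Your sketch correctly locates the difficulty but does not resolve it, so as written it should not be regarded as a proof of the lemma; citing Bondy--Entringer is both the paper's intent and the right move.
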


\begin{lemma}[Jackson~\cite{jackson1986longest}]\label{lemma:jackson}
	Let~$G$ be a $3$-connected graph on $n$ vertices and let~$e_1,e_2 \in
	E(G)$. Then $e_1$ and~$e_2$ are contained in a cycle of~$G$ of length at
	least $n^{\log_2(1+\sqrt{5})-1}  + 1 = \Omega(n^{0.694})$.
\end{lemma}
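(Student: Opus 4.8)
The plan is to prove the statement by induction on $n=|V(G)|$, where the two prescribed edges $e_1,e_2$ in the hypothesis are exactly what makes the induction go through: they let us ``glue'' together cycles found in smaller pieces. For $n$ below a fixed constant one simply inspects the finitely many $3$-connected cubic graphs (starting with $K_4$), where the claimed bound is small. For the inductive step, note that a $3$-connected graph is $3$-edge-connected, so the only edge-cuts of $G$ of size less than $4$ have size exactly $3$; we split according to whether $G$ has such a cut with both sides containing a cycle (equivalently, with both sides nontrivial) or not.

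\textbf{The decomposable case.} Suppose $\{f_1,f_2,f_3\}$ is a $3$-edge-cut separating $V(G)$ into $X$ and $\bar X$, with $f_i=p_ip_i'$, $p_i\in X$; $3$-connectivity forces $p_1,p_2,p_3$ (and $p_1',p_2',p_3'$) to be distinct, and a short parity count shows $|X|,|\bar X|\geq 3$. Let $G_1$ be $G[X]$ together with a new apex vertex $t$ joined to $p_1,p_2,p_3$, and define $G_2$ symmetrically; both are $3$-connected cubic graphs, on $|X|+1$ and $|\bar X|+1$ vertices. Applying the induction hypothesis on $G_1$ with the two apex edges $tp_1,tp_2$ prescribed yields a cycle that, since $\deg t=3$, passes through exactly $tp_1$ and $tp_2$ and hence restricts to a long $p_1$--$p_2$ path in $G[X]$; doing the same on $G_2$ with $tp_1',tp_2'$ gives a long $p_1'$--$p_2'$ path in $G[\bar X]$, and the two paths close up through $f_1,f_2$ into a cycle of $G$. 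The prescribed edges $e_1,e_2$ are incorporated on whichever side contains them, which is why the induction is actually run on a slightly strengthened statement allowing up to a bounded number of prescribed edges (and/or prescribed gluing-endpoints). The resulting recurrence $f(n)\geq f(n_1)+f(n_2)-O(1)$ with $n_1+n_2=n+2$ is super-additive and so poses no difficulty for the target.

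\textbf{The cyclically $4$-edge-connected case.} Here every $3$-edge-cut is trivial, and we instead apply a local reduction: delete or contract a small subgraph disjoint from $e_1,e_2$ to obtain a cubic graph $G'$ on a constant fraction fewer vertices, where cyclic $4$-edge-connectivity is precisely the hypothesis ensuring $G'$ is still $3$-connected, since a new small cut of $G'$ would pull back to a small cyclic cut of $G$. A long cycle in $G'$ re-expands to a cycle of $G$; because the re-expansion forces several reduced-in edges to lie on the cycle while we may prescribe only boundedly many, the honest accounting yields a Fibonacci-type recurrence in which the two sub-instances have sizes that are constant fractions $\alpha n$ and $\beta n$ of $n$ with $\alpha^{\gamma}+\beta^{\gamma}=1$ for $\gamma=\log_2\frac{1+\sqrt5}{2}$; the prototypical instance is $f(n)\geq f(\lfloor n/2\rfloor)+f(\lfloor n/4\rfloor)-O(1)$, for which indeed $(1/2)^\gamma+(1/4)^\gamma=1$ because $2^{-\gamma}$ is the positive root of $x+x^2=1$. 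Solving the recurrence gives $f(n)=\Omega(n^{\gamma})$, and a routine tightening of the additive slack (inducting on $f(n)-1$ and absorbing small $n$ into the base case) sharpens this to the stated $n^{\gamma}+1$.

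The hard part is the cyclically $4$-edge-connected case, on two fronts. First, one must exhibit a local reduction that genuinely preserves $3$-connectivity for every large cyclically $4$-edge-connected cubic graph and is compatible with avoiding $e_1,e_2$, routing degenerate by-products of the reduction (parallel edges, newly created triangles, $e_1$ and $e_2$ becoming adjacent or coinciding) into the constant base cases. Second — and this is the real crux — the bookkeeping of prescribed edges through the reduction must be arranged so that the induced recurrence has exponent exactly $\log_2\frac{1+\sqrt5}{2}$: a careless reduction either traps both $e_1$ and $e_2$ on one side, leaving a sub-instance with too many demands and no usable recursion, or splits off only a constant number of vertices, yielding no sublinear improvement at all, and only the right balanced split recovers Jackson's exponent.
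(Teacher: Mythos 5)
There is a genuine gap, and it is worth noting first that the paper itself does not prove this statement at all: it is quoted verbatim from Jackson~\cite{jackson1986longest} and used as a black box, so the only question is whether your sketch would stand on its own as a proof of Jackson's theorem. It does not. The central step --- the cyclically $4$-edge-connected case, which you yourself identify as ``the real crux'' --- is never actually carried out: no local reduction is defined, no argument is given that any such reduction preserves $3$-connectivity while avoiding $e_1,e_2$, and the Fibonacci-type recurrence $f(n)\geq f(\lfloor n/2\rfloor)+f(\lfloor n/4\rfloor)-O(1)$ is asserted rather than derived from any concrete construction. Observing that $2^{-\gamma}$ with $\gamma=\log_2\frac{1+\sqrt5}{2}$ solves $x+x^2=1$ explains why the exponent is plausible, but it is reverse-engineered from the answer; the entire content of Jackson's proof is producing a reduction whose accounting actually yields such a recurrence, and that is exactly the part left as a to-do list.

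The decomposable case also has an unresolved circularity in the bookkeeping of prescribed edges. To glue across a nontrivial $3$-edge-cut you apply the induction hypothesis to $G_1$ prescribing the two apex edges $tp_1,tp_2$; but if $e_1$ (say) lies in $G[X]$ you must additionally force the cycle through $e_1$, i.e.\ through \emph{three} prescribed edges, which the two-edge statement you are inducting on does not provide. You acknowledge this by invoking ``a slightly strengthened statement allowing up to a bounded number of prescribed edges,'' but you never formulate that statement, and it is false in the naive form (one cannot prescribe arbitrarily many edges, and even three prescribed edges need not lie on a common cycle in a $3$-connected cubic graph). Jackson's actual induction hypothesis is a considerably more delicate statement (involving prescribed elements and a potential function over the pieces of the decomposition), and choosing it correctly is inseparable from making the recurrence close. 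So both halves of the argument reduce to the very lemmas that would constitute the proof; as written, the proposal is a plan, not a proof.
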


\noindent
We will further need the following simple observation:
\begin{observation}\label{obs:simple}
  A $3$-connected cubic graph is either simple or it is the multi-graph on
  two vertices with three edges.
\end{observation}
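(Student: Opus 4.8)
The plan is a direct case analysis on $|V(G)|$ and the multiplicity of a parallel edge. Suppose $G$ is a $3$-connected cubic graph that is not simple, and fix two vertices $u,v$ joined by a multi-edge. Since the degree sum $3|V(G)|$ must be even, $|V(G)|$ is even, so I treat $|V(G)| = 2$ and $|V(G)| \geq 4$ separately.

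If $|V(G)| = 2$, then $u$ and $v$ are the only vertices, and since $G$ is loopless and $3$-regular all three edges incident to $u$ go to $v$; thus $G$ is exactly the multi-graph on two vertices with three edges, which is the claimed exception.

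If $|V(G)| \geq 4$, I claim $G$ is simple. First, $u$ and $v$ cannot be joined by three parallel edges: that would exhaust all edges at $u$ and at $v$, making $\{u,v\}$ a connected component, which is impossible since a $3$-connected graph is connected while $|V(G)| \geq 4 > 2$. Hence $u$ and $v$ are joined by exactly two parallel edges, so $v$ has a unique third neighbour $v'$ with $v' \notin \{u,v\}$ (a third edge to $u$ is ruled out by the previous sentence, and there is no loop at $v$). Now delete the two distinct vertices $u$ and $v'$: every neighbour of $v$ is removed, so $v$ becomes isolated, while at least one vertex of $V(G) \setminus \{u,v,v'\}$ survives because $|V(G)| \geq 4$. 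Therefore $\{u,v'\}$ is a separator of size two, contradicting $3$-connectedness; so no parallel edge can exist and $G$ is simple.

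There is no real obstacle here; the only subtlety is to keep the two-vertex graph as a genuine exception (the separator argument degenerates there), and to observe that $3$-connectedness is invoked at two different strengths — merely as connectedness to exclude the triple edge once $|V(G)| \geq 4$, and in full strength to exclude the size-two separator $\{u,v'\}$.
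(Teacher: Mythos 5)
Your proof is correct, and it is complete in a way the paper's own one-line argument is not: you explicitly dispose of the two-vertex exception and of the triple-edge case, both of which the paper leaves implicit. The route is, however, genuinely different. The paper stays at the level of \emph{edge} cuts: once $a,b$ are joined by at least two parallel edges, at most one edge at $a$ and one at $b$ can leave $\{a,b\}$, so $(\{a,b\},\,V(G)\setminus\{a,b\})$ is a cut of size at most two, and that already contradicts $3$-edge-connectedness directly. You instead exhibit a \emph{vertex} separator $\{u,v'\}$ isolating $v$. For cubic graphs the two notions coincide (edge-connectivity equals vertex-connectivity), so both arguments prove the same thing; but the paper's edge-cut version is the one that matches the paper's stated definition of connectivity (it defines $k$-edge-connectivity, not vertex-connectivity) and needs no appeal to that coincidence, nor a separate triple-edge case. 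Your version has the virtue of being airtight about the boundary cases, which the paper's terse wording (``$a$ and $b$ must each have one further edge leaving $\{a,b\}$'') glosses over when the multiplicity is three. Either proof is acceptable; if you want to match the paper's setup most closely, replace the separator step by the observation that at most two edges leave $\{u,v\}$, which gives a $2$-edge-cut.
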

\begin{proof}
  Assume there exists a graph~$G$ with parallel edges between $a,b$ on more
  than two vertices. Then $a$ and $b$ must each have one further edge
  that leaves the set $\{a,b\}$. But then $(\{a,b\}, V(G)-\{a,b\})$ is a cut
  of size two, contradiction.
\end{proof}

\noindent 
We will also need the following simple proposition regarding $2$-connected
cubic graphs.
\begin{proposition}\label{prop:2Cycle}
	Let $G$ be a $2$-connected cubic multi-graph and $e_1$, $e_2$  edges in
	the graph. There exists a cycle in $G$ containing both $e_1$ and $e_2$.
  Furthermore, if$e_2$ is contained in a cycle of length $\ell$ then 
  $e_1$ is contained in a cycle of length at least $\ell/2 + 1$.
\end{proposition}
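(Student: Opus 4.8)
The first part is a standard consequence of Menger's theorem: in a $2$-connected multi-graph, any two edges lie on a common cycle. The plan is to subdivide $e_1 = xy$ and $e_2 = uv$ by placing new vertices $p$ and $q$ on them; the resulting graph is still $2$-connected, so by Menger there are two internally vertex-disjoint $p$--$q$ paths, and their union is a cycle through both $p$ and $q$, hence through $e_1$ and $e_2$ after undoing the subdivision. (If $e_1 = e_2$ there is nothing to prove, and if they share an endpoint the same argument applies verbatim.)

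For the ``furthermore'' part, let $C$ be a cycle of length $\ell$ containing $e_2 = uv$, and let $e_1 = xy$. I would consider the cycle $C'$ through $e_1$ and $e_2$ produced above. Write $C'$ as the union of two arcs $A_1, A_2$ joining $u$ and $v$ (so $e_2$ is, say, the single-edge arc $A_2$, and $e_1$ lies on $A_1$). Now I want to ``shortcut'' $A_1$ using $C$. The key point is that $C$ gives a $u$--$v$ path of length $\ell - 1$ avoiding the edge $e_2$; call it $P$. Consider the closed walk formed by $P$ together with $A_1$ (both joining $u$ and $v$). This closed walk contains $e_1$ (it is on $A_1$) and has length $|P| + |A_1| \ge (\ell-1) + 1 = \ell$, but it need not be a cycle. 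To extract a cycle containing $e_1$, traverse $A_1$ from $u$ towards $y$ then $x$ and beyond; since $A_1$ and $P$ both start at $u$ and end at $v$, I can take the last vertex $w$ on the sub-arc of $A_1$ from $u$ to $x$ (inclusive of $e_1$'s far side) that also lies on $P$, and the first vertex $w'$ after $e_1$ on $A_1$ that lies on $P$. The portion of $A_1$ between $w$ and $w'$ (containing $e_1$) together with the portion of $P$ between $w$ and $w'$ forms a cycle through $e_1$.

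The length bound then comes from a counting/pigeonhole step: the cycle through $e_1$ built this way consists of a piece of $A_1$ and a piece of $P$, and I should argue one of these two pieces is long. Here is the cleaner route I would actually take: since $C$ is a cycle of length $\ell$ through $e_2 = uv$, deleting $e_2$ leaves a path $P$ of length $\ell - 1$ from $u$ to $v$. Take any $u$--$v$ path $Q$ in $G - e_2$ that uses $e_1$ (this exists: $e_1$ lies on a cycle with $e_2$, remove $e_2$ from that cycle). Form the subgraph $P \cup Q$. It is connected, every vertex has degree $\le 4$, and $u, v$ have odd degree in it unless $P$ and $Q$ coincide near them; in any case $P \cup Q$ contains a cycle through $e_1$ (since $e_1 \in Q$ and $Q$'s endpoints are joined by $P$, the edge $e_1$ is not a bridge of $P \cup Q$). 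Now among all such cycles through $e_1$ inside $P \cup Q$, pick one, say $D$; its edge set is covered by $E(P) \cup E(Q)$, so $|D| \le |E(P)| + |E(Q)|$ — that is the wrong direction. The right direction: the two arcs of $D$ between its ``branch vertices'' on $P$ are each sub-paths of $P$ or $Q$; more simply, $D$ contains at least half of the edges of the shorter of any two $u$--$v$ paths forming it. I expect the cleanest finish is: let $D$ be a shortest cycle through $e_1$ in $P \cup Q$; then $D - e_1$ is a shortest $x$--$y$ path in $(P \cup Q) - e_1$, and since $P$ alone contains an $x$--$y$ walk of length $\ge \ell - 2$ (going around via $u$ and $v$), together with the standard fact that a shortest path plus that walk must ``meet'', one gets $|D| \ge \ell/2 + 1$ by a pigeonhole on which half of $P$ the path $D - e_1$ can avoid.

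\textbf{Main obstacle.} The only real subtlety is the last counting step — converting ``$e_1$ lies on a cycle, and there is a long $u$--$v$ path $P$ disjoint from $e_2$'' into the explicit bound $\ell/2 + 1$. The cleanest way is to fix the long path $P = C - e_2$, fix the cycle $C'$ through $e_1$ and $e_2$, write $C' = A_1 \cup \{e_2\}$, walk along $A_1$ and at the first and last points where it touches $P$ on either side of $e_1$, splice in the longer of the two corresponding sub-paths of $P$; since those two sub-paths of $P$ together have length $\ge \ell - 1 - (\text{overlap})$, one of them has length $\ge (\ell-1)/2$, giving a cycle through $e_1$ of length $\ge (\ell-1)/2 + 1 \ge \ell/2 + 1$ after a rounding check. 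I would write this splicing argument carefully, as it is where an off-by-one in the constant could creep in.
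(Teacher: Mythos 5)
Your first part (existence of a common cycle via subdivision and Menger) is fine. For the ``furthermore'' part, the sketch in your last paragraph is essentially the same splicing argument the paper uses: take a cycle $C'$ through $e_1$, extract a sub-arc $Q$ of $C'$ containing $e_1$ whose endpoints $w,w'$ lie on $C$ but whose interior does not, and splice $Q$ into an arc of $C$. One small structural difference: the paper starts from two independent cycles $C_1 \ni e_1$ and $C_2 \ni e_2$ and therefore has to treat the vertex-disjoint case separately, whereas you take $C'$ to contain \emph{both} $e_1$ and $e_2$, so $C'$ and $C$ automatically share $u,v$ and that case never arises --- a mild simplification.

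The off-by-one worry you flag is, however, a real gap as written. You measure arcs inside the \emph{path} $P = C - e_2$, which has length $\ell - 1$; the longer of the two pieces then has length at least $(\ell-1)/2$, and adding $Q$ (length $\geq 1$) gives only $(\ell-1)/2 + 1 = \ell/2 + 1/2$, which is strictly below $\ell/2 + 1$ and is not saved by rounding when $\ell$ is odd. The fix is to measure arcs inside the \emph{cycle} $C$: the two $C$-arcs between $w$ and $w'$ have lengths summing to exactly $\ell$ (one of them passes through $e_2$), and both are internally disjoint from $Q$ because $Q$'s interior avoids $V(C) = V(P)$ by the choice of $w,w'$; moreover $w \neq w'$ since they lie on the two different components of $A_1 - e_1$. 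Hence the longer $C$-arc has length at least $\ell/2$, and together with $Q$ it forms a cycle through $e_1$ of length at least $\ell/2 + 1$, as required.
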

\begin{proof}
  Since $G$ is $2$-connected, every pair of vertices incident to the same
  edge have two edge-disjoint paths between them. Therefore, $G$ has a cycle
  $C_1$ containing $e_1$ and a cycle $C_2$ containing $e_2$. Let $\ell := |C_2|$
  be the length of $C_2$. If $e_1$ is in $C_2$, then the
  proposition holds. So, assume that $e_1$ is not in $C_2$.

  Suppose that $C_1$ and $C_2$ have common edges. There exists a 
  shortest path in $C_1$ that contains $e_1$ and is between two vertices in
  $C_2$. Using this path and the rest of $C_2$, we get a cycle that has
  $e_1$ and $e_2$ as edges, or possibly just $e_1$, but has length at least
  $\ell/2+1$ ($e_1$ may be in parallel to an edge of $C_2$), which implies
  the theorem.

  We note that if $C_1$ and $C_2$ have a common vertex, then they also share
  at least one edge, because $H$ is cubic. Hence we only need to deal with the
  case that $C_1$ and $C_2$ are vertex-disjoint.
  Let $x$ be a vertex in $C_1$ and $y$ a vertex in $C_2$. Since $H$ is 
  $2$-connected, there exist two edge-disjoint paths between $x$ and $y$. These
  paths each have a minimal subpath that contains a vertex from $C_1$ and
  from $C_2$ but none of their edges. Using these paths and $C_1$ and $C_2$,
  we get a cycle that has $e_1$ and $e_2$ as edges, or possibly just $e_1$,
  but has length at least $\ell/2+1$ ($e_1$ may be in parallel to an edge of
  $C_2$), which proves the theorem.
\end{proof}

\noindent 
For the remainder of this section we fix $G$ to be  an arbitrary but
sufficiently large $2$-connected graph  on $n$ vertices.  We next give three
more definitions and one proof that are essential for this section's main result.
Afterwards we explain how these notions work together.

\begin{definition}[Tombolo, tombolo-cut]\label{def:semiIsland}
	A cut-set of size exactly $2$ is called a \emph{tombolo}. 
	A \emph{tombolo-cut} of a graph is a cut whose cut-set is a tombolo.
\end{definition}

\begin{proposition}\label{prop:edge_disjoint}
	For every tombolo-cut $(U, \bar U)$ of $G$ the edges of its tombolo are
	vertex disjoint.
\end{proposition}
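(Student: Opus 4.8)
The plan is to argue by contradiction: suppose the tombolo of some tombolo-cut $(U,\bar U)$ consists of two edges sharing a common endpoint~$v$. Since $G$ is cubic, $v$ has exactly three incident edges, so if two of them lie in the cutset then the third edge, say $vw$, is the only other edge at~$v$. Exactly one of $v,w$ lies in $U$ and the other in $\bar U$ — wait, not quite; both endpoints of the cut-edges at $v$ are on the opposite side from $v$, so $v$ lies alone on its side with respect to those two edges, and the edge $vw$ does not cross the cut, meaning $w$ is on the same side as~$v$.

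First I would pin down where $v$ sits. The two tombolo edges are $va$ and $vb$ with $a,b$ on the side not containing~$v$; the remaining edge $vw$ stays inside $v$'s side. So $\{v\}$ is separated from the rest of its side only through~$w$. Now consider removing the single edge $vw$: this disconnects $v$ from everything, because its only remaining edges $va,vb$ would then have to be counted — no, $va$ and $vb$ are still present. Let me instead look at the cut $(\{v\}, V(G)-v)$: its cutset is exactly $\{va,vb,vw\}$, which has size three, not directly a contradiction. The real contradiction comes from $2$-connectedness viewed through edge-disjoint paths: since $va$ and $vb$ both leave $v$ to the far side, and $vw$ stays, any path from $v$ to a vertex on the far side other than via $va,vb$ must pass through~$w$ and then must cross the cut $(U,\bar U)$ — but the cutset has only the two edges $va,vb$, both incident to~$v$, so such a path crossing the cut cannot avoid~$v$. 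This shows $w$ and $v$ have no two edge-disjoint paths between a far-side vertex, contradicting $2$-edge-connectivity.

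Let me state it more cleanly, which is the form I would actually write. Assume $v$ is the shared endpoint and, without loss of generality, $v\in U$; then $a,b\in\bar U$ and $w\in U$. Since $\bar U\neq\emptyset$, pick any $u\in\bar U$. By $2$-edge-connectivity there are two edge-disjoint $v$--$u$ paths in~$G$. Each such path starts at $v$ and must cross the cut $(U,\bar U)$, hence must use one of the two cut-edges $va$ or $vb$; so one path uses $va$ and the other uses~$vb$, and both of these edges are incident to~$v$. This is consistent so far — I need to find the genuine obstruction. The obstruction is this: every edge of the cutset is incident to~$v$, so deleting the \emph{vertex}~$v$ leaves $U-v$ and $\bar U$ with no edge between them, i.e.\ $v$ is a cut-vertex separating $U-v$ from $\bar U$ (and $U-v\supseteq\{w\}$ is nonempty since $G$ has more than the trivial two vertices — here I would invoke that $G$ is a sufficiently large $2$-connected graph, or note $G$ is cubic so $|U|\geq 2$). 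A cut-vertex contradicts $2$-\emph{vertex}-connectivity; and since the paper works with $2$-edge-connected cubic graphs, I would instead phrase it via edges: the two edges $va,vb$ together form an edge-cut of size two both of whose edges are incident to the same vertex, and removing $va$ together with $vw$ disconnects $v$ (whose only remaining edge is $vb$... no, $vb$ remains).

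The cleanest correct route, and the one I would commit to, uses Proposition~\ref{prop:2Cycle} or the basic ear/cycle structure: in a $2$-edge-connected graph every edge lies on a cycle. Take the edge $va$; it lies on a cycle~$C$. The cycle enters and leaves~$v$ using two of $v$'s three edges, so $C$ uses $va$ and one of $\{vb,vw\}$. If $C$ uses $va$ and $vb$, then $C$ must cross the cut $(U,\bar U)$ an even number of times and uses both cut-edges at the two appearances of~$v$ — but then after leaving $U$ via $va$, the cycle stays in $\bar U$ until it returns via $vb$, which is fine, yet now consider the edge $vw$: it lies on some cycle $C'$, and $C'$ uses $vw$ plus one of $va,vb$, so $C'$ crosses the cut, entering $\bar U$ and needing to come back, but its only way back into $U$ through the cutset is the other of $va,vb$ — again incident to~$v$ — so $C'$ visits $v$ twice, impossible for a cycle. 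Hence no cycle contains $vw$, contradicting $2$-edge-connectivity. \textbf{Main obstacle:} making airtight the counting of cut-crossings for a cycle/path and ruling out the degenerate tiny graph; once one fixes that every cycle through $vw$ is forced to revisit~$v$, the contradiction is immediate, so I expect the write-up to be short — two or three sentences invoking that every edge of a $2$-edge-connected graph lies on a cycle together with a parity-of-crossings argument for the size-two cutset.
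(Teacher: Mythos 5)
Your final committed argument is correct, but note that it is a genuinely different route from the paper's. You argue: if both tombolo edges share the endpoint $v$, then $v$'s third edge $vw$ stays inside $U$; any cycle through $vw$ must use one of $va,vb$ at $v$, enter $\bar U$, and can only return via the other of $va,vb$ (the only remaining cut edge), which is again incident to $v$, forcing the cycle to visit $v$ twice; so $vw$ lies on no cycle, contradicting $2$-edge-connectedness. The paper instead uses a one-line vertex-shift: if $a_1=a_2$, then $(U-a_1,\bar U+a_1)$ has cutset of size one (the single edge from $a_1$ into $U-a_1$), immediately contradicting $2$-edge-connectedness. Both arguments are sound; the paper's is shorter and avoids invoking the every-edge-on-a-cycle fact. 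Your intermediate observation that $v$ would be a cut vertex is also a valid route in this setting, since in cubic graphs $2$-edge-connectivity implies $2$-vertex-connectivity, but you did not commit to it. One small caveat: the write-up contains several false starts and a minor misstatement (\emph{``uses both cut-edges at the two appearances of~$v$''} in the $C$-analysis — $v$ appears once on $C$ there), which should be pruned; the committed $vw$-cycle argument in the final sentences is the part that stands.
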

\begin{proof}
	Let $(U, \bar U)$ be a tombolo-cut and $\{a_1b_1, a_2b_2\}$ its tombolo,
	where $a_1,a_2\in U$. Assume, for the sake of contradiction that $a_1 =
	a_2$. This implies that $(U - a_1,\bar U + a_1)$ is a cut of size one,
	contradicting that $G$ is $2$-connected. Hence, $a_1 \neq a_2$ and, by a
	symmetric argument, $b_1 \neq b_2$.
\end{proof}

\noindent
The following definition of a virtual subgraph is somewhat similar to the more general
idea of a \emph{torso} in a graph decomposition: we take a subgraph and encode the
external connectivity by adding (virtual) edges.

\begin{definition}[Virtual subgraph, virtual edges and real edges]
  A pair of vertices is called a \emph{port-pair} if they are in the same set of
  the tombolo-cut and incident to the cut's edges.
	A \emph{virtual subgraph} of a multi-graph $G$ is a multi-graph $H$ such 
	that $V(H) \subseteq V(G)$ and $E(H)$ is obtained by taking
	\begin{enumerate}
		\item\label{item:redges} all the edges of the sub-graph of $G$ induced on $V(H)$, referred to as the real edges of $H$, and
		\item\label{item:pedges} an edge $ab$ (possibly parallel), called a \emph{virtual edge}, for every $a$ and $b$ that are the vertices of a port-pair of  a tombolo whose edges are both not in $E(H)$.
	\end{enumerate} 
\end{definition}

\noindent
For the sake of clarity, we will call a tombolo-cut in a virtual subgraph 
a \emph{virtual tombolo-cut}. The cut itself might or might not actually use
virtual edges.

\begin{definition}[Peninsula]
  A virtual subgraph $H$ is called a \emph{peninsula}, if $(V(H),V(G)-V(H))$ is
  a tombolo-cut or if $H = G$.
\end{definition}

\noindent 
The proof of this section's main result is based on an induction over the
number of vertices in a peninsula.  The next proposition enables us to use
Lemma~\ref{lem:bondy1980twoconnected} on peninsulas.  After that we provide an
extra definition that enables us to formally explain how peninsulas connect with
virtual graphs and how they can be used for constructing paths.

%% we need this because the 2-connected cycle lemma is for regular graphs (to the best of my knowledge)
\begin{proposition}\label{prop:oneEntry}
  Let $H$ be a peninsula.  If $H \neq G$, then it has exactly one virtual edge
  and the rest of its edges are real edges.  The graph obtained from $H$ by
  replacing parallel edges with a single edge is $2$-connected and has maximum
  degree $3$.
\end{proposition}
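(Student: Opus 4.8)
The plan is to unpack the two claims of Proposition~\ref{prop:oneEntry} in turn, relying on the structural facts established just before it.

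\paragraph{Counting virtual edges.}
First I would recall that, by definition, a peninsula~$H$ with $H \neq G$ is a virtual subgraph whose vertex set~$V(H)$ witnesses a tombolo-cut $(V(H), V(G) - V(H))$. By Definition of a virtual subgraph, the only virtual edges of~$H$ are those coming from port-pairs of tombolos \emph{both} of whose edges lie outside $E(H)$. I claim the tombolo of the cut $(V(H), V(G) - V(H))$ is the unique such tombolo. On the one hand, both edges of this particular tombolo have exactly one endpoint in~$V(H)$, hence neither is a real edge of~$H$, so it does contribute a virtual edge; moreover, by Proposition~\ref{prop:edge_disjoint} its two port-endpoints inside~$V(H)$ are distinct, so this contributes exactly one virtual edge (between those two vertices). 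On the other hand, suppose some other tombolo $\{a_1b_1, a_2b_2\}$ also had both edges outside $E(H)$ with a port-pair $a_1,a_2 \in V(H)$. Then $b_1, b_2 \notin V(H)$ (otherwise the edge $a_ib_i$ would be induced on $V(H)$ and hence real), so $\{a_1b_1,a_2b_2\}$ is a set of two edges crossing the cut $(V(H), V(G)-V(H))$; but that cut has size exactly two, so $\{a_1b_1,a_2b_2\}$ must coincide with the cut's own tombolo. Hence $H$ has exactly one virtual edge and all remaining edges are real.

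\paragraph{Two-connectivity and maximum degree.}
Let $H'$ be obtained from $H$ by replacing parallel edges with single edges. For the degree bound: in~$G$ every vertex has degree~$3$; a vertex $v \in V(H)$ keeps in~$H$ all its real edges (to neighbours in $V(H)$) and, if it is one of the two port-vertices, also the one virtual edge; vertices incident to an edge of the cut lose exactly one real edge (the crossing one) and gain one virtual edge, and all other vertices only lose edges. So every vertex of~$H$ has degree at most~$3$, and this is preserved in~$H'$. For $2$-connectivity, I would argue that $H'$ has no cut-vertex and no bridge. Removing a single edge~$e$ from~$H'$ (equivalently, removing a parallel class from~$H$): if $e$ is the virtual edge or corresponds to one of the cut edges' port-pairs, note that in~$G$ the two endpoints of~$e$ are joined by a path avoiding~$e$; I would trace such a path in~$G$ and show it can be pushed into~$V(H)$ — whenever it would leave~$V(H)$ it must cross the size-two tombolo, and the two crossing edges together with a detour outside can be replaced by the virtual edge, keeping us inside~$H$. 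If $e$ is a genuine real edge, $2$-connectivity of~$G$ again gives a path between its endpoints avoiding~$e$, and the same pushing-in argument confines it to~$H$ (using the virtual edge as a shortcut across the cut). The same scheme handles deletion of a vertex $v \in V(H)$: take two internally-disjoint $x$-$y$ paths in~$G$ avoiding~$v$ for any $x,y \in V(H)-v$, then route them inside~$H$ by collapsing any excursion across the tombolo into the single virtual edge. Some care is needed when $v$ is itself a port-vertex, but then the virtual edge is incident to~$v$ and is deleted along with it, and the other port-vertex is still reachable through the real part of~$H$.

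\paragraph{Main obstacle.}
The routine part is the edge/degree bookkeeping; the genuine work is the ``pushing-in'' step for $2$-connectivity — formally showing that a path or a pair of disjoint paths in~$G$ can always be rerouted so that each maximal excursion out of~$V(H)$ is replaced by a single use of the virtual edge, without creating repeated vertices or edges. The key point making this possible is that the cut has size exactly~$2$: any walk leaving and re-entering $V(H)$ uses both cut edges, so the two port-vertices are exactly the entry and exit points, and the virtual edge between them is a legitimate substitute. I expect this rerouting lemma, stated carefully, to be the crux of the proof.
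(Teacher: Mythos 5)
Your proposal is correct and takes essentially the same approach as the paper: count the single virtual edge arising from the tombolo-cut $(V(H),\bar{V(H)})$ using Proposition~\ref{prop:edge_disjoint}, read the degree bound off $G$ being cubic, and establish $2$-connectivity by transferring paths from~$G$ into~$H$ via the virtual edge. You are in fact somewhat more explicit than the paper on two points it treats as immediate---the uniqueness of the virtual edge and the $2$-connectivity of~$H$ before parallel edges are suppressed (your ``pushing-in'' rerouting)---and you correctly identify that the size-two cut forcing any excursion to use both tombolo edges is exactly what makes the substitution by the virtual edge legitimate.
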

\begin{proof}
  If $H = G$, then the proposition holds, by the assumption on $G$, so we assume
  that $H\neq G$.
    
  By construction, the only vertices that are not adjacent to the same vertices
  in $G$ and in $H$, are the ones that are adjacent to the tombolo-cut
  separating $H$ from the rest of the graph. We note that, by
  Proposition~\ref{prop:edge_disjoint}, that these are a pair of distinct vertices
  $a, b \in H$.
  Thus, by the definition of a virtual subgraph, the single virtual edge $ab$ 
  is added to create $H$ from $G[V(H)]$. Since $G$ is cubic, all vertices
  in $H$ have degree exactly three. The simple graph $H^*$ obtained from $H$
  by removing potential parallel edges between $a$ and $b$ has therefore
  only vertices of degree $2$ and $3$.

  It is left to show that this graph $H^*$ is indeed $2$-connected. If $H^* = H$
  then the statement clearly holds, so assume otherwise. Consider any pair
  of vertices $x,y \in H$, since $H$ is $2$-connected there exist two 
  edge-disjoint paths connecting them. Since $H$ is cubic, at most one of
  them can use an edge between $a$ and $b$, thus the same paths exist in $H^*$
  and we conclude that $H^*$ is $2$-connected.
\end{proof}

\begin{definition}[Internal port and external port]
  The single virtual edge of a peninsula, that is not the whole graph, is call
  the peninsula's \emph{internal port}. If $H = G$ an arbitrary edge is fixed to
  be its \emph{internal port}. The virtual edge created by a peninsula $H$
  in a virtual subgraph (other than $H$) is called its \emph{external port}.
\end{definition}

\begin{lemma}\label{lem:islandDetour}
	Let $e = ab$ be the external port of a peninsula $H$.
	There exists an $a$-$b$-path in $G$ whose internal vertices lie entirely in $H$
	and that has length at least $\twoCycle{H}$.
\end{lemma}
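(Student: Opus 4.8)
The plan is to reduce the statement to finding a long path, inside the induced subgraph $G[V(H)]$, between the two endpoints of $H$'s internal port, and then to produce that path by cutting open a long cycle.

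First I would record the setup. Since $H$ has an external port, necessarily $H\neq G$, so Proposition~\ref{prop:oneEntry} applies: $H$ is a $2$-connected cubic multi-graph with exactly one virtual edge — its internal port — all of whose remaining edges are real, and the simple graph $H^*$ obtained from $H$ by merging parallel edges is $2$-connected, has maximum degree $3$, and satisfies $|H^*|=|H|$. Next I would unwind the definitions of the two ports. Let $(V(H),\overline{V(H)})$ be the tombolo-cut separating $H$, with tombolo $\{x_1y_1,x_2y_2\}$ where $x_1,x_2\in V(H)$ and $y_1,y_2\notin V(H)$; by Proposition~\ref{prop:edge_disjoint} (and since the $y_i$ lie outside $V(H)$) these four vertices are distinct. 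Then the internal port of $H$ is the virtual edge $p:=x_1x_2$, while the external port $e=ab$ satisfies $\{a,b\}=\{y_1,y_2\}$; say $a=y_1$ and $b=y_2$. It therefore suffices to exhibit an $x_1$-$x_2$-path $Q$ in $G[V(H)]$ of length at least $2\log|H|-2\log\log|H|-10$: prepending the real edge $y_1x_1$ and appending $x_2y_2$ turns $Q$ into an $a$-$b$-path of $G$ that is simple (because $a,b\notin V(H)$ and $a\neq b$), whose internal vertices are exactly those of $Q$ and so lie in $H$, and whose length is $|Q|+2\geq\twoCycle{H}$.

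To construct $Q$ I would argue as follows. If $2\log|H|-2\log\log|H|-10\leq 1$, take for $Q$ any $x_1$-$x_2$-path in $G[V(H)]$, which exists since $G[V(H)]=H-p$ is connected ($H$ being $2$-edge-connected). Otherwise, apply Lemma~\ref{lem:bondy1980twoconnected} to $H^*$ to get a cycle $C_0$ with $|C_0|\geq 4\log|H|-4\log\log|H|-20$; as $E(H^*)\subseteq E(H)$, this $C_0$ is also a cycle of $H$. Fixing any edge $q$ of $C_0$ and applying Proposition~\ref{prop:2Cycle} in the $2$-connected cubic multi-graph $H$ to the pair $(p,q)$, we obtain a cycle $C'$ of $H$ through $p$ with $|C'|\geq |C_0|/2+1\geq 2\log|H|-2\log\log|H|-9$. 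Deleting the edge $p$ from $C'$ gives an $x_1$-$x_2$-path of length $|C'|-1\geq 2\log|H|-2\log\log|H|-10$; and since $p$ is the \emph{only} virtual edge of $H$, every edge of this path is real, so the path lies in $G[V(H)]$ and serves as $Q$.

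The only step with real content is forcing a long cycle through the specific edge $p$, which is exactly what Proposition~\ref{prop:2Cycle} provides — and this halving of the cycle length is why the bound here is $2\log|H|$ rather than the $4\log|H|$ of Lemma~\ref{lem:bondy1980twoconnected}. Everything else is definitional bookkeeping: identifying the external port's endpoints $a,b$ with the two cut edges and the internal port's endpoints $x_1,x_2$, and exploiting that $H$ carries a single virtual edge so that removing it from a cycle leaves a path that genuinely lives inside $G$.
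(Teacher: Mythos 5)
Your proof is correct and follows the same route as the paper: apply Lemma~\ref{lem:bondy1980twoconnected} to the deparallelized graph $\hat H = H^*$ to obtain a long cycle, use Proposition~\ref{prop:2Cycle} to force a cycle of at least half that length through the internal port, then open this cycle at the internal port and extend both ends across the tombolo. You are a bit more meticulous than the paper on bookkeeping (handling the degenerate small case, identifying all four tombolo endpoints, and applying Proposition~\ref{prop:2Cycle} to the cubic multi-graph $H$ itself rather than to $\hat H$, which a priori only has maximum degree $3$), but the underlying argument is identical.
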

\begin{proof}
	Let $a$ and $b$ be the vertices adjacent to $H$'s internal port.
	By Proposition~\ref{prop:oneEntry}, the graph $\hat H$ that we get by replacing every parallel edge in $H$ with a single edge is $2$-connected and has maximum degree $3$. It also still has an edge $ab$.
	Thus, by Lemma~\ref{lem:bondy1980twoconnected}, $\hat H$ has cycle of  length at least $4\log|H| -4\log\log|H| - 20$.
	This implies, by Proposition~\ref{prop:2Cycle}, that $\hat H$ has a cycle of  length at least $2\log|H| -2\log\log|H| - 9$ that includes $ab$.
    By construction, the same holds for $H$.
    Using this cycle and the tombolo separating $H$ from the rest of the graph, we conclude the existence of the path required for the lemma to hold.
\end{proof}

Peninsulas by themselves are not sufficient for our goal. We need a virtual subgraph that will enable us to use Lemma~\ref{lemma:jackson}. We next give an algorithm that provides us with such a virtual graph.

\paragraph{The core of a peninsula $H$.}
The \emph{core} a peninsula $H$, with internal port $p$, is a virtual subgraph $C$ obtained using the following algorithm:
% \begin{algorithm}[h]
% 	\caption{Core finding algorithm}

\noindent\hspace*{.025\textwidth}\begin{minipage}{.95\textwidth}
\rule{\textwidth}{1.5pt}\vspace*{-4pt}%
\captionof{algorithm}{Core finding algorithm}
\rule{\textwidth}{.8pt}\vspace*{-10pt}%
\begin{algorithmic}[1]
	\STATE Set $C_1$ to be $H$ and $p = xy$ to be $H$'s internal port
	\STATE Set $i$ to be $1$
	\WHILE {$C_i$ has a virtual tombolo-cut $(U,V(C_i)-U)$ without $p$ in the cut-set}	
		\STATE Increase $i$ by $1$
    \IF{$\{x,y\} \subseteq U$}\label{alg:Ci}
      \STATE Set $C_i$ to be the virtual subgraph induced by $U$
    \ELSE\label{alg:Cii}
      \STATE Set $C_i$ to be the virtual subgraph induced by $V(H)-U$
    \ENDIF
	\ENDWHILE
	\STATE Set $C$ to be $C_i$
\end{algorithmic}\vspace*{-5pt}%
\rule{\textwidth}{1.5pt}%
\end{minipage}
\bigskip

\noindent  
Algorithm 1 always terminates since at each the new graph has strictly
less vertices and the initial number of vertices is finite. For simplicity, we
will still speak of \emph{the} core of a peninsula with the understanding that
any virtual subgraph constructed by the above algorithm will work\footnote{Cores
are unique but we do not need that fact and hence do not prove it here.}.

\begin{lemma}\label{lem:IncP}
  Let $H$ be a peninsula with internal port $p$ and core $C$. Then $p \in E(C)$.
\end{lemma}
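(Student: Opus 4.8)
The plan is to trace through Algorithm 1 and argue that the internal port $p$ is preserved as an edge in every intermediate graph $C_i$, and hence in the final core $C = C_i$. I would proceed by induction on $i$, the loop counter. The base case is immediate: $C_1 = H$ and $p = xy$ is $H$'s internal port, so $p \in E(C_1)$.

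For the inductive step, suppose $p = xy \in E(C_i)$ and the loop executes once more, producing $C_{i+1}$. By the while-condition, there is a virtual tombolo-cut $(U, V(C_i) - U)$ of $C_i$ whose cut-set does not contain $p$. Since $p = xy$ is an edge of $C_i$ not crossing this cut, both endpoints $x$ and $y$ lie on the same side: either $\{x,y\} \subseteq U$ or $\{x,y\} \subseteq V(C_i) - U$. The algorithm then sets $C_{i+1}$ to be the virtual subgraph induced by whichever of $U$ or $V(C_i)-U$ contains $\{x,y\}$ (line~\ref{alg:Ci} or line~\ref{alg:Cii}). I need to check that the edge $xy$ survives this passage to the virtual subgraph. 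Since $x$ and $y$ are both vertices of the chosen set, say $W$, the edge $xy$ of $C_i$ is an edge of the induced subgraph on $W$; I would argue it becomes a real edge of the virtual subgraph induced by $W$ (it was already present, whether it was real or virtual in $C_i$, because the endpoints are both retained and the edge does not cross the cut being removed). Hence $p \in E(C_{i+1})$, completing the induction. When the loop terminates at some $C_i$, we set $C := C_i$, so $p \in E(C)$.

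The one subtlety I want to handle carefully is the interaction between the notion ``virtual subgraph induced by $W$'' and the status of the edge $p$: the definition of virtual subgraph rebuilds the edge set by taking all real edges of $G[W]$ plus fresh virtual edges for port-pairs of tombolos whose edges are both absent. If $p$ was a real edge of $G$ to begin with (which is the case when $H \neq G$, since the internal port is real except possibly for parallel multiplicity — here I should be slightly careful because Proposition~\ref{prop:oneEntry} says the internal port is the single virtual edge of a peninsula $H \neq G$), then $p$ reappears automatically. If $p$ is itself a virtual edge, I must check it is not accidentally dropped: but a virtual edge of $C_i$ corresponds to a tombolo of $G$ whose two real edges are absent from $C_i$ and hence from any virtual subgraph of it, so the port-pair $\{x,y\}$ — both of whose vertices lie in $W$ — again triggers the creation of the virtual edge $xy$ in the virtual subgraph induced by $W$. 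Either way $p \in E(C_{i+1})$.

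The main obstacle, then, is purely definitional bookkeeping: making sure that ``passing to the virtual subgraph induced by a side of a tombolo-cut that does not separate $x$ from $y$'' genuinely retains the edge $xy$, rather than a routine graph-theoretic difficulty. Once that is pinned down, the induction is short. I would phrase the writeup as: state the invariant ``$p \in E(C_i)$ for every $i$'', prove it by induction using the case analysis on which side of the cut contains $\{x,y\}$, invoke the definition of virtual subgraph to see the edge persists, and conclude $p \in E(C)$ since $C = C_i$ for the terminal value of $i$.
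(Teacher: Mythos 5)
Your proof is correct and takes essentially the same approach as the paper: an induction over the iterations of Algorithm~1, showing that $p$ survives each passage from $C_i$ to $C_{i+1}$ because the while-condition excludes $p$ from the chosen cut-set, so both endpoints of $p$ land on the retained side. The paper phrases the invariant in terms of vertex membership (it tracks the port-pair $\{a_1,b_1\}$ of $p$ and the opposite port-pair $\{a_2,b_2\}$ of $H$'s external port, concluding at the end that the definition of virtual edges forces $p\in E(C)$), whereas you track $p\in E(C_i)$ directly and split on whether $p$ is real or virtual; these are the same argument with different bookkeeping. One verbal slip worth fixing in a write-up: when $p$ is a virtual edge of $C_i$ it does not ``become a real edge'' of $C_{i+1}$ as your second paragraph momentarily suggests---it is re-created as a virtual edge---but you already correct this yourself in the third paragraph, so the content is sound.
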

\begin{proof}
  Let $p = a_1b_1$ and let $a_2$,$b_2$ be
  the vertices adjacent to $H$'s external port, hence $a_1,b_1\in V(H)$
  and  $a_2,b_2\not\in V(H)$. Furthermore, let $H = C_1,\ldots,C_\ell = C$ be the 
  sequences of virtual subgraphs constructed by Algorithm 1.
  
  Since $C_1$ is $H$ we also know that $a_1,b_1\in V(C_1)$ and  $a_2,b_2\not\in
  V(C_1)$.  When constructing $C_{i+1}$ from $C_i$, according to
  Lines~\ref{alg:Ci} and~\ref{alg:Cii} of Algorithm 1,  it is ensured that
  $a_1,b_1\in  V(C_{i+1})$ and $a_2,b_2\not\in V(C_{i+1})$. Consequently at the
  end $a_1,b_1\in V(C)$ and  $a_2,b_2\not\in V(C)$ and therefore---by the
  definition of virtual edges---it follows that $p\in E(C)$.
\end{proof}

\noindent
For a core $C$ and a peninsula $S$ whose external port~$e$ lies in $E(C)$ we
say that $C$ and $S$ are \emph{linked} or \emph{linked via $e$}.
The following lemma proves that all peninsulas linked to a core are disjoint 
and separated by the core from each other.

Now we are ready to explain how the induction we use for the main result works.
We have a peninsula $H$ and its core $C$. 
Our induction assumption is that for every peninsula $\hat H$ with internal port $\hat p$ and every vertex $x$ adjacent to $\hat p$, $\hat H$ has a long enough path that avoids $p$ and has $x$ as an end point.
The idea is to find a path with the required properties in $C$ and turn it into a path in $H$, by replacing external edges in this path, with paths in their peninsulas.
 The following lemmas provide the means to show that this can be done. 

\begin{lemma}\label{lem:cubicAndVirtual}
  Let $H$ be a peninsula with internal port $p$ and core $C$. Then $C$ is cubic,
  $2$-connected and for every pair of distinct peninsulas $S_1$ and $S_2$ linked to
  $C$ it holds that $S_1$ and $S_2$ do not have edges between them
  and that they do not share vertices with each other or with $C$.
\end{lemma}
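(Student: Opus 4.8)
The statement has three parts: (1) the core $C$ is cubic, (2) $C$ is $2$-connected, and (3) distinct linked peninsulas $S_1, S_2$ are pairwise disjoint from each other and from $C$ (both in vertices and in edges between them). I would prove them roughly in that order, with (3) being the real work.

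For (1) and (2), the plan is an induction on the length of the sequence $H = C_1, \ldots, C_\ell = C$ produced by Algorithm~1. The base case $C_1 = H$ holds by Proposition~\ref{prop:oneEntry}: $H$ (after merging parallel edges) is $2$-connected with maximum degree $3$, and in fact every vertex of the multigraph $H$ has degree exactly $3$ since $G$ is cubic and the internal port replaces exactly the two cut edges; so $H$ is cubic in the multigraph sense. For the inductive step, suppose $C_{i+1}$ is obtained from $C_i$ by cutting along a virtual tombolo-cut $(U, V(C_i)-U)$ not using $p$, and taking the side containing both endpoints $x,y$ of $p$ (say $U$). The key point is that the virtual-subgraph operation on that side deletes the two real cut-edges and adds a single virtual edge on the port-pair inside $U$; since each of those two port vertices loses exactly one real edge (to the other side) and gains exactly one virtual edge, degrees are preserved, so $C_{i+1}$ is again $3$-regular. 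For $2$-connectedness of $C_{i+1}$: take $u,v \in U$ and two edge-disjoint paths between them in $C_i$ (which exists by induction); reroute any traversal of the cut through the new virtual edge, using the fact that any path crossing the cut enters and leaves through the tombolo edges, which both attach to the port-pair. One must be slightly careful that at most one of the two edge-disjoint paths uses the virtual edge, exactly as in the proof of Proposition~\ref{prop:oneEntry}; since $C_{i+1}$ is cubic this is automatic. Finally, $p \in E(C)$ comes for free from Lemma~\ref{lem:IncP}, which I would cite rather than reprove.

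For (3), fix a core $C$ and two distinct peninsulas $S_1, S_2$ linked to $C$ via external ports $e_1, e_2 \in E(C)$. First note $e_1 \neq e_2$: a peninsula's external port is the virtual edge it induces in the ambient virtual subgraph, and Proposition~\ref{prop:edge_disjoint} together with the structure of tombolo-cuts forces distinct peninsulas (as vertex sets) to induce distinct virtual edges — if $e_1 = e_2$ then $S_1$ and $S_2$ would be separated from $C$ by the \emph{same} tombolo-cut, which by uniqueness of the side (the one not containing $p$) forces $S_1 = S_2$. Now, $S_i$ is the side of a tombolo-cut of $G$ not containing $p$; since the core's construction only ever kept the side containing $p$, all of $V(S_i)$ was discarded along the way, so $V(S_i) \cap V(C) = \emptyset$. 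For $V(S_1) \cap V(S_2) = \emptyset$: each $S_i = (X_i, \bar X_i)$-side of a tombolo-cut with $p$ on the $\bar X_i$ side; if $X_1 \cap X_2 \neq \emptyset$ then by submodularity of cut-size (the cut function of $G$ is submodular) either $X_1 \cap X_2$ or $X_1 \cup X_2$ is again a cut of size $\leq 2$, and one checks the size is exactly $2$ because $G$ is $2$-connected (no size-$0$ or size-$1$ cuts). Chasing which side contains $p$ and which contains the two external ports $e_1, e_2 \in E(C)$ then forces $X_1 \subseteq X_2$ or $X_2 \subseteq X_1$; but a strict containment would mean the external port of one peninsula sits \emph{inside} the other, contradicting that both $e_1, e_2$ survive into $E(C)$ (equivalently, contradicting $e_1 \neq e_2$ with both in $E(C)$). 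Hence $X_1 = X_2$, so $S_1 = S_2$, contradiction. Finally, "no edges between $S_1$ and $S_2$": any such edge would be an edge of $G$ with endpoints in disjoint sets $V(S_1), V(S_2)$, each separated from the rest of $G$ by a tombolo; such an edge would have to be one of the (at most two) tombolo edges of $S_1$, whose other endpoint is a port vertex of $C$, not of $S_2$ — so no such edge exists.

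\textbf{Main obstacle.} The delicate part is (3), specifically the submodularity/uncrossing argument: keeping track of which side of each tombolo-cut contains the distinguished port $p$ versus the external ports $e_1, e_2$, and converting "the external ports both survive in $C$" into a clean contradiction with a strict nesting $X_1 \subsetneq X_2$. The bookkeeping is where an error is most likely to creep in; everything else (cubicity, $2$-connectedness preservation) is a routine degree-count and path-rerouting exercise modeled on Proposition~\ref{prop:oneEntry}.
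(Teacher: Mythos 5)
Your plan for parts (1) and (2) — cubicity and $2$-connectedness of $C$ by induction on the sequence $H=C_1,\ldots,C_\ell=C$, with a degree-count and path-rerouting argument modeled on Proposition~\ref{prop:oneEntry} — matches the paper's approach in spirit. The paper does the same induction, but it strengthens the induction hypothesis to include a crucial partition invariant: \emph{$V(C_i)$ together with the vertex sets of all peninsulas linked to $C_i$ forms a partition of $V(G)$}. This invariant is what simultaneously delivers cubicity, $2$-connectedness (by showing that the new cut $(\mathbf{U},\bar{\mathbf{U}})$ in $G$, where $\mathbf{U}$ collects $U$ and all peninsulas linked to $C_i$, is itself a tombolo-cut), and the disjointness statements in part (3).

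Your handling of part (3) departs from the paper and contains a real gap. You assert the key fact $V(S_i)\cap V(C)=\emptyset$ by saying ``the core's construction only ever kept the side containing $p$, all of $V(S_i)$ was discarded along the way.'' This does not follow from the definitions without work. A peninsula $S_i$ linked to $C$ is any peninsula whose external port happens to lie in $E(C)$; it is not by definition one of the sets $V(C_{j-1})\setminus V(C_j)$ that the algorithm peels off. Indeed, when $H\neq G$, one linked peninsula is $V(G)\setminus V(H)$, which never appears in any $C_j$ at all; and for a virtual edge $e_i\in E(C)$ created at step $j$, the corresponding $V(S_i)$ is the side of a tombolo-cut in $G$, which may encompass both the discarded piece $V(C_{j-1})\setminus V(C_j)$ and the vertex sets of further peninsulas that were already hanging off $C_{j-1}$ — precisely the bookkeeping the paper's $\mathbf{U}$-island / $\bar{\mathbf{U}}$-island argument is designed to track. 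In short, the claim you treat as a freebie is essentially equivalent to the paper's partition invariant, and you would have to prove it by the same (or an equivalent) induction before the uncrossing step can even begin. Your claim that $e_1\neq e_2$ is also underjustified: the definition allows parallel virtual edges, and in a cubic graph two different tombolos can share a port-pair, so ``same virtual edge implies same tombolo-cut'' needs an argument. Once $V(S_i)\cap V(C)=\emptyset$ is in hand, your submodularity/uncrossing route does work (the crossing case forces $d(X_1\cap X_2)=0$, not merely nesting, once you use that both cut-sets land in $V(C)$), so it is a legitimate alternative to the paper's ``assemble the new big peninsula $S$ and check the partition'' finish — but it does not let you avoid establishing the invariant.
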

\begin{proof}
  Let $(C_i)_{i \in [\ell]}$ with~$C_1 = H$ and $C_\ell = C$ be the sequence of
  virtual subgraphs constructed by Algorithm 1. We prove, by induction
  on~$i$, that the above statement holds for all~$C_i$ and therefore in particular
  for $C$. We further prove that $C_i$ together with all peninsulas linked to it
  forms a partition of $V(G)$.
  
  Let us begin with $C_1 = H$. By construction and by
  Proposition~\ref{prop:oneEntry}, $H$ is cubic, $2$-connected and only has $p$
  as a virtual edge. If~$H = G$ then the $V(H)$ trivially partitions $V(G)$.
  Otherwise, $H$ is a peninsula linked to another peninsula via $p$ and
  both together partition $V(G)$ via the tombolo-cut between them. The claimed
  properties therefore hold trivially.
  
  This concludes the induction base and we now assume that the lemma statement
  holds for~$C_{i-1}, i \geq 2 $ and prove it for $C_i$.  Let $(U,V(C_{i-1})-U)$
  be the cut with a cut-set of size $2$ that was used in order to construct $C_i$.
  Assume without loss of generality that $V(C_i) = U$ (otherwise rename the cut).
  Let $x_1, y_1 \in V(C_i)$ be the vertices that are adjacent to the edges of the
  cut-set of $(U,V(C_{i-1})-U)$.
  
  Let $\U$ contain $U$ and all vertices of peninsulas linked to $C_i$.
  In the following, we will refer to these peninsulas as \emph{$\U$-islands}.
  By induction, $\bar \U := V(G)-\hat U$ contains $V(C_{i-1})-U$ and the vertices of all
  peninsulas which are linked to $C_{i-1}$ but not to $C_i$. We  will refer to 
  these peninsulas as \emph{$\bar \U$-islands}. ($\star$) Note that, by construction,
  there can be edges from $\bar \U$-islands into $U$ but there cannot be 
  any edges from $\U$-islands into $(V(C_{i-1})-U)$.
  
  Assume first that $(\U, \bar \U)$ is a tombolo-cut in $G$. Then $x_1$ and
  $y_1$ are adjacent to the tombolo edges and lie inside $\U$.
  Accordingly, there exists a virtual edge $x_1y_1$ in $C_i$ which does
  not exist in $C_{i-1}$. Since $C_{i-1}$ is, by induction, cubic and 
  $2$-connected, it follows that $C_i$ is, too, due to this additional virtual 
  edge. Therefore it suffices to show that indeed $(\U, \bar\U)$ is a tombolo-cut in order to proof that $C_i$ cubic and $2$-connected.
  
  To that end, we show that the cut-set of $(\U, \bar\U)$ in $G$ 
  is at most as large as the cut-set of $(U,V(C_{i-1})-U)$ in $C_{i-1}$,
  which contains $2$ edges. Since $G$ is $2$-connected, this implies that 
  $(\U, \bar\U)$ is a tombolo-cut.
  % \hl{Note that according to our
  % construction of $\U$, if an edge in the cut-set of $(\U, \bar{\U})$ is adjacent to a vertex in $U$ and a vertex in $V(C_{i-1})$, then
  % according to the definition of a virtual subgraph this edge is also in
  %  $C_{i-1}$.}
  
  Consider any edge~$xy$ from the cut-set of $(\U, \bar\U)$ 
  in $G$ with $x \in \U$ and $y \in \bar\U$. By the induction
  assumption, $xy$ cannot connect two distinct peninsulas linked to
  $C_{i-1}$. Since every peninsula linked to $C_{i-1}$ is, by 
  construction, contained completely in either $\U$ or $\bar\U$,
  we further conclude that $xy$ cannot lie entirely within a single
  $\U$-island or $\bar\U$-island. As observed above ($\star$), $\U$-islands have no 
  edges into $(V(C_i)-U)$ and therefore $x$ cannot lie inside an $\U$-island,
  which implies that $x \in U$.
  
  If $y \in (V(C_{i-1}) - U)$, then $xy$ is an edge in $G$
  and therefore a real edge in $C_{i-1}$, hence it is also contained in the
  cut $(U, V(C_{i-1})-U)$ in $C_{i-1}$. This leaves the case in which
  $y$ is contained in a $\bar\U$-island $S$. Let $xy'$ be the external port
  of $S$ in $C_{i-1}$. Since $S$ is a $\bar\U$-island, $y'$ must lie in
  $V(C_{i-1}) - U$ and, by construction of $C_{i-1}$,
  $xy'$ is a virtual edge in $E(C_{i-1})$. We charge the virtual $xy'$ to
  the real edge $xy$.
  We conclude that every distinct edge in the cut-set of $(\U, \bar\U)$
  implies a distinct edge in the cut-set of $(U,V(C_{i-1})-U)$.

  Finally, let us proof that all peninsulas linked to $C_i$ are pairwise
  vertex-disjoint, not connected to each other and disjoint from $C_i$. We
  proved above that the cut $(\U,\bar\U)$ is actually a tombolo-cut.
  Accordingly, all $\bar\U$-islands together with $(V(C_{i-1}) - U)$ form a
  single large peninsula $S$ linked to $C_i$ via $x_1y_1$ while all
  $\U$-islands are linked to $C_i$ the same way they were linked to $C_{i-1}$.
  Thus the claim still holds among all $\U$-islands and we have to only consider
  cases involving the newly linked peninsula $S$. We already observed that
  $\U$-islands have no edges into $(V(C_{i-1}) - U)$ and, by induction, no edges
  into $\bar\U$-islands, thus they have no edge into $S$ and are disjoint from
  it. The peninsula $S$ is, by construction via a cut in $C_{i-1}$, disjoint
  from $U = V(C_i)$ and the claim that $U$ together with $S$ and all
  $\U$-islands partitions $V(G)$ follows from the same construction.
\end{proof}

\begin{lemma}\label{lem:Core2}
	Let $H$ be a peninsula with internal port $p$ and core $C$, where $|C| > 2$. 
	Then $C$ is either $3$-connected and does not have parallel edges, or 
  $C$ is $2$-connected with a two-cut $(U,V(C)-U)$ that contains $p$ in its cut-set
  and in which the virtual subgraphs induced by $U$ and $(V(C)-U)$ 
  are both cubic and $3$-connected.
\end{lemma}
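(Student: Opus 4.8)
The plan is to lean on the two facts already available about the core $C$: by Lemma~\ref{lem:cubicAndVirtual} it is cubic and $2$-connected, and by Lemma~\ref{lem:IncP} its internal port $p$ is an edge of $C$. The extra ingredient is the halting property of Algorithm~1: when it stopped, $C$ had no virtual tombolo-cut whose cut-set avoids $p$, which (as $C$ is $2$-connected) is the same as saying that every cut-set of $C$ of size at most $2$ contains $p$. I would first dispose of the easy case: if $C$ has no cut-set of size $2$ at all, then being $2$-connected and cubic it is $3$-connected, and Observation~\ref{obs:simple} together with $|C|>2$ shows it has no parallel edges; this is the first alternative. So from here on assume $C$ has a tombolo-cut and write $p=xy$.

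Next I would analyse $C-p$, the graph $C$ with the edge $p$ deleted, which is connected because $C$ is $2$-connected. The key observation is that every bridge of $C-p$ separates $x$ from $y$: otherwise, putting $p$ back leaves that bridge a cut-set of $C$ of size $1$ (contradicting $2$-connectivity) or of size $2$ avoiding $p$ (contradicting the halting property). Hence all bridges of $C-p$ lie on the path joining the $2$-edge-connected component of $x$ to that of $y$ in the block-tree of $C-p$. I would then argue this path has length exactly $1$: a path of length $\geq 2$, a branching component, or a component lying off the path would each produce either a size-$2$ cut-set of $C$ avoiding $p$ or an outright bridge of $C$, all of which are excluded; and a path of length $0$ would make $C-p$ $2$-edge-connected, hence (every $2$-edge cut of $C$ then containing $p$) $C$ $3$-connected, contrary to the case assumption. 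So $C-p$ has a single bridge $e_1$, splitting $V(C)$ into $U\sqcup\bar U$ (writing $\bar U:=V(C)-U$) with $x\in U$, $y\in\bar U$, the induced subgraphs $C[U]$ and $C[\bar U]$ both $2$-edge-connected (a further bridge inside either would be a second bridge of $C-p$). Proposition~\ref{prop:edge_disjoint} applied to the tombolo-cut $(U,\bar U)$ of $C$ gives that its two edges $p,e_1$ are vertex-disjoint, so the $U$-ends $x,z$ of $p,e_1$ are distinct, and likewise the $\bar U$-ends $y,y'$. This $(U,\bar U)$ is the two-cut promised by the lemma, and it contains $p$ in its cut-set.

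It then remains to check the two sides. By the definition of a virtual subgraph, the virtual subgraph induced by $U$ is $C_U:=C[U]$ with a single extra virtual edge $xz$ (the only edges of $C$ leaving $U$ are $p$ and $e_1$, and they collapse to the one port-pair edge on $x$ and $z$), and symmetrically the side induced by $\bar U$ is $C_{\bar U}:=C[\bar U]+yy'$. Each is cubic, since $x$ and $z$ each swap their lost cut-edge for the new virtual edge while every other vertex of $U$ keeps all three of its $C$-edges. To see $C_U$ is $3$-connected I would show it has no cut-set of size $2$: if $(W,U-W)$ were one and $xz$ were not a cut-edge, then with $x,z$ on one side, say in $W$, the set $U-W$ is joined to the rest of $C$ only by the two cut-edges (which lie in $C[U]$), so $(U-W,V(C)-(U-W))$ is a size-$2$ cut-set of $C$ avoiding $p$ — impossible; and if $xz$ is a cut-edge, with second cut-edge $g$, then in $C-p$ the only edge crossing $(W,V(C)-W)$ is $g$ (the other cut-edge of $C_U$ is the deleted $xz$, and $e_1$ meets $z\notin W$), making $g$ a bridge of $C-p$ inside the $2$-edge-connected $C[U]$ — also impossible. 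By symmetry the same holds for $C_{\bar U}$, giving the second alternative; when $|U|=2$ this $C_U$ is the exceptional $3$-connected cubic multigraph on two vertices, which the statement permits.

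The step I expect to be the real obstacle is establishing that $C-p$ has \emph{exactly} one bridge: this is where the halting property of the core algorithm must be combined delicately with $2$-connectivity and cubicity of $C$, ruling out long chains of blocks, branching blocks, and side-blocks, and keeping track of the small cases ($|U|=2$, and the edge-versus-virtual status of $p$ and $e_1$). Once that structural fact is in hand, identifying the induced virtual subgraphs and verifying the $3$-connectivity of the two halves is routine bookkeeping against the halting property.
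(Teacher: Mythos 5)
Your proposal is correct and follows essentially the same route as the paper: use the halting property of Algorithm~1 together with the bridge/block structure of $C-p$ (every bridge must separate the endpoints of $p$, so the block-tree is a path, and more than one bridge would give an internal block whose boundary is a tombolo-cut avoiding $p$) to conclude that $C-p$ has at most one bridge, then read off the unique two-cut and check the two halves. The paper's own proof is terser at both ends---it picks two bridges sharing a block without dwelling on why the block-tree has the required shape, and it asserts that cubicity and $3$-connectivity of the two induced virtual subgraphs ``directly follow'' from every other cut having size $\geq 3$---whereas you spell out these verifications; the underlying argument is the same.
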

\begin{proof}
  Since Algorithm 1 iterates as long as there is a virtual
  tombolo-cut that
  does not contain $p$ it is clear that it terminates if either $C_i$ is
  $3$-connected or if every  virtual tombolo-cut contains $p$. Let us first
  show that there is actually only a single such cut.
  
  Assume towards a contradiction that there exist multiple virtual tombolo-cuts
  with $p$ in them. Note that the respective other edge is a bridge in the graph
  $C-p$. Let $e, e'$ be bridges in $C-p$ which both have an endpoint in some
  $2$-connected component $X$ of $C-p$. Then $X$ is $(X,V(C)-X)$ is a tombolo-cut
  which does not contain $p$ in its tombolo---a contradiction. We
  conclude $(U, V(C)-U)$ is the only virtual tombolo-cut in $C$
  and we let $\{p=x_1y_1,x_2y_2\}$ be its tombolo.
  
  Since every other cut of $C$ must have size three or larger, its directly
  follows that the virtual subgraphs induced by $U$ (with the additional virtual
  edge $x_1x_2$) and  $(V(C)-U)$ (with the additional virtual edge 
  $y_1y_2$) are cubic and $3$-connected.
\end{proof}

\begin{lemma}\label{lem:CorePath}
  Let $H$ be a peninsula, $p$ its internal port and $C$ its core.  For every
  vertex $x$ adjacent to $p$ there exists a path in $C$ of length at least
  $|C|^{0.69}/2-1$ that avoids $p$ and has $x$ as one of its endpoints.
\end{lemma}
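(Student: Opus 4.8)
The plan is to split along the structure provided by Lemma~\ref{lem:Core2}, since that lemma already isolates the two cases we face: either the core $C$ is $3$-connected without parallel edges, in which case Jackson's Lemma~\ref{lemma:jackson} applies directly, or $C$ is $2$-connected with a single two-cut $(U, V(C)-U)$ whose cut-set contains $p$ and such that both sides become cubic and $3$-connected after adding their respective virtual edge. In the first case, pick the edge $e' \neq p$ incident to $x$ in $C$ (it exists since $C$ is cubic, and there are two such choices), and apply Lemma~\ref{lemma:jackson} to the pair $p, e'$ to get a cycle $D$ of length at least $|C|^{0.69}+1$ containing both $p$ and $e'$, hence passing through $x$. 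Deleting the edge $p$ from $D$ leaves a path of length at least $|C|^{0.69}$ which avoids $p$; it has two endpoints, namely the two vertices $a, b$ incident to $p$, and since $x \in \{a, b\}$ one endpoint is $x$ as required. Note $|C|^{0.69} \geq |C|^{0.69}/2 - 1$, so this case gives even more than claimed.

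\textbf{The second case.} Here $p = x_1 y_1$ sits in the two-cut, with $x = x_1$ say (the vertex adjacent to $p$; if $x = y_1$ the argument is symmetric). Let $U$ be the side containing $x_1$, and let $x_1 y_2$ be the other tombolo edge, so the virtual subgraph $C[U]$ together with the virtual edge $x_1 x_2$ — wait, I must be careful with the labels from Lemma~\ref{lem:Core2}: the tombolo is $\{p = x_1 y_1, x_2 y_2\}$, and the side $U$ gets virtual edge $x_1 x_2$ while $V(C)-U$ gets $y_1 y_2$. Say $x = x_1 \in U$. Then $C[U] \cup \{x_1 x_2\}$ is cubic and $3$-connected, so by Lemma~\ref{lemma:jackson} applied to the pair of edges $x_1 x_2$ and the second edge at $x_1$ in $C[U]$ (which is a \emph{real} edge of $C$, distinct from $p$, since $p$ goes to the other side), we obtain a cycle in $C[U]$ of length at least $|U|^{0.69}+1$ through $x_1$. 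Removing the virtual edge $x_1 x_2$ yields an $x_1$-$x_2$-path of length at least $|U|^{0.69}$ inside $C[U]$, avoiding $p$, with $x = x_1$ as one endpoint. If $|U| \geq |C|/2$ this already beats $|C|^{0.69}/2 - 1$; but in general we only know $|U| + |V(C)-U| = |C|$, so one side could be small. The fix: if $x_1 \in U$ but $|U| < |C|/2$, then the path found on the $U$-side is too short, so instead route through the large side. The large side $V(C)-U$ contains $y_1$ and $y_2$ and, with virtual edge $y_1 y_2$, is cubic and $3$-connected; apply Jackson to the pair $y_1 y_2$ and the second edge at $y_1$ (a real edge, $\neq p$) to get a cycle of length $\geq |V(C)-U|^{0.69}+1 \geq (|C|/2)^{0.69}+1$ through $y_1$, remove $y_1 y_2$ to get a $y_1$-$y_2$-path $Q$ of length $\geq (|C|/2)^{0.69}$ avoiding $p$. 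Then prepend the single real edge $p = x_1 y_1$ to $Q$: this gives an $x_1$-$y_2$-path — but it \emph{uses} $p$, which is forbidden. So instead I should find a short path from $x_1$ into the large side not using $p$: since $C$ is $2$-connected and cubic, $x_1$ has a neighbour other than $y_1$; trace from $x_1$ through $C[U]$ until first hitting $\{y_1, y_2\}$ — but $C[U]$ has no edges to the other side except via the tombolo, so this does not directly work either.

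\textbf{Resolving the size issue — the main obstacle.} The genuinely delicate point is precisely this: when $x$ lies on the small side of the two-cut, we cannot cheaply reach the large side without using $p$. I would handle it as follows. Work entirely inside $C[U] \cup \{x_1 x_2\}$ regardless: we get an $x_1$-$x_2$-path $P_U$ of length $\geq |U|^{0.69}$ avoiding $p$. Symmetrically, $C[V(C)-U] \cup \{y_1 y_2\}$ gives a $y_1$-$y_2$-path $P_W$ of length $\geq |V(C)-U|^{0.69}$. Now $p = x_1 y_1$ is a real edge of $C$, and the other tombolo edge $x_2 y_2$ is also a real edge of $C$ (it is $x_2 y_2$ with $x_2 \in U$, $y_2 \in V(C)-U$). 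So $P_U$ (which runs $x_1 \ldots x_2$), then the real edge $x_2 y_2$, then $P_W$ reversed (running $y_2 \ldots y_1$) concatenate into a single path from $x_1$ to $y_1$ of length $\geq |U|^{0.69} + 1 + |V(C)-U|^{0.69}$, and it avoids $p = x_1 y_1$ since neither $P_U$ nor $P_W$ nor the edge $x_2 y_2$ is $p$. Its endpoint $x_1 = x$ is as desired. By convexity of $t \mapsto t^{0.69}$ being concave, $|U|^{0.69} + |V(C)-U|^{0.69} \geq |C|^{0.69}$ is \emph{false} in general; rather the minimum of $a^{0.69} + b^{0.69}$ subject to $a+b = |C|$, $a,b \geq 1$ is achieved at the extreme $a = 1$, giving $\geq 1 + (|C|-1)^{0.69}$, and since $|C| > 2$ we have $(|C|-1)^{0.69} \geq (|C|/2)^{0.69} = |C|^{0.69}/2^{0.69} \geq |C|^{0.69}/2$. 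Hence the concatenated path has length at least $|C|^{0.69}/2 \geq |C|^{0.69}/2 - 1$, which is the bound claimed (the slack $-1$ is there precisely to absorb exactly this kind of rounding and the degenerate small-side case). The main work, then, is this careful bookkeeping of which tombolo edge is real versus virtual and verifying the $2^{0.69} < 2$ arithmetic; the conceptual content is entirely supplied by Lemmas~\ref{lemma:jackson} and~\ref{lem:Core2}.
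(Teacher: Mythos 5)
Your proposal is essentially correct and reaches the stated bound, but in the second case it takes a genuinely different route from the paper. In the three-connected case you and the paper both apply Jackson's Lemma~\ref{lemma:jackson} to a cycle through~$p$ and then delete~$p$ (your extra edge~$e'$ is unnecessary: any cycle containing~$p$ already passes through both vertices adjacent to~$p$, so removing~$p$ gives a path with both of those vertices as endpoints, covering either choice of~$x$).

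In the two-cut case the paper works only with the larger side~$\hat C$: it finds a cycle through the virtual edge~$p'$ in~$\hat C$, lifts it to a cycle in~$C$ through~$p$ (by leaving~$\hat C$ through the real tombolo edge and returning via the other side), and removes~$p$. You instead find Jackson paths on \emph{both} sides of the two-cut and splice them together with the real tombolo edge~$x_2y_2$. This is arguably cleaner: it produces an $x_1$-$y_1$-path directly, so both admissible choices of~$x$ are handled symmetrically; it never has to justify lifting a cycle across the cut (a step the paper states somewhat tersely); and your convexity/extremal calculation on $|U|^{0.69}+|V(C)-U|^{0.69}$ gives the same constant. The price is that you must invoke Jackson twice and verify that both sides are simple and large enough.

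The one place you should tighten is the degenerate multigraph cases, which the paper treats explicitly and you elide. Lemma~\ref{lem:Core2} is only stated for $|C|>2$, so the case where~$C$ itself is the two-vertex three-edge multigraph needs its own (trivial) sentence. More importantly, in the two-cut case either side could be the two-vertex three-edge multigraph, in which case that side is not simple and Jackson's Lemma does not apply to it; there the ``path'' on that side is a single edge of length~$1$, which is \emph{less} than $|U|^{0.69}=2^{0.69}$, so your lower bound on that side fails. The composed bound still goes through because of the $-1$ slack in the target, but you should say so rather than leaving it to Jackson. (The paper sidesteps this by only ever taking the larger side, where $|\hat C|\geq|C|/2>1$ still leaves the two-vertex case possible when $|C|$ is small, and it handles it with the $2|\hat C|/3$ estimate and Observation~\ref{obs:simple}.)
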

\begin{proof}
  By Lemma~\ref{lem:Core2}, the core $C$ is either $3$-connected or has a special
  cut. Suppose first that $C$ is $3$-connected. If $C$ is the multigraph with
  two vertices and three edges, then there is trivially a cycle of length at
  least $2|C|/3 = 4/3 > 2^{0.69}/2-1$ that includes $p$. Otherwise, by
  Observation~\ref{obs:simple}, $C$ is simple and by Lemma~\ref{lemma:jackson}
  has a cycle of length at least $|C|^{0.69}$ that includes $p$. In either case,
  the respective cycle shows that for every vertex $x$ adjacent to $p$ there
  exists a path as required in the lemma.  
			
  Now assume that $C$ is not $3$-connected but instead can be partitioned into
  two sets $U$ and $U'$ such that the virtual subgraphs  induced by these sets
  are both $3$-connected.  At least one of the sets has $|C|/2$ vertices,
  without loss of generality assume that this is $U$ (otherwise rename the sets
  accordingly).  Let $\hat C$ be the virtual subgraph whose vertices are the set
  $U$ and let $p'$ be the virtual edge in $\hat C$ that includes a vertex adjacent
  to $p$. If $\hat C$ is simple, then by Lemma~\ref{lemma:jackson} it has a
  cycle of length at least $|\hat C|^{0.69}$ that includes $p'$; if $C$ has
  only two vertices then trivially it  has a cycle of length at least $2|\hat
  C|/3$ that includes $p'$.  According to the construction of $p'$, this cycle
  implies the existence of a cycle in $C$ that includes $p$ and has length at
  least $(|\hat C|/2)^{0.69}$, which in turn implies the existence of a path as
  stated by the lemma.
\end{proof}

\begin{lemma}\label{lemma:OneEdge}%% because of the not including $p$
	Let $H$ be a peninsula, $p$ its internal port, $C$ its core and $e \neq p$ an
  	edge in $E(C)$. For every vertex $x$ adjacent to $p$ there exists a path in $C$
	that starts in $x$, avoids $p$ and has $e$ as its last edge.
\end{lemma}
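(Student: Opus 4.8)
\emph{Proof plan.} The plan is to strip off two trivial situations and then case on the structure of the core~$C$ supplied by Lemma~\ref{lem:Core2}. First, if $x$ is an endpoint of~$e$, then the one-edge path consisting of~$e$ alone starts at~$x$, has~$e$ as its last edge, and avoids~$p$ since $e\neq p$; so from now on assume $x$ is not an endpoint of~$e$ and write $e=ab$. Second, if $|C|=2$ then $C$ is cubic by Lemma~\ref{lem:cubicAndVirtual}, hence the two-vertex graph with three parallel edges, so~$x$ is automatically an endpoint of~$e$ — already handled. Thus we may assume $|C|>2$ and invoke Lemma~\ref{lem:Core2}. The whole argument then rests on one reduction: it suffices to produce a simple path~$Q$ in~$C$ from~$x$ to~$a$ that avoids~$p$ and does not pass through~$b$ (or, symmetrically, from~$x$ to~$b$ avoiding~$p$ and~$a$). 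Such a~$Q$ cannot contain~$e$, because~$e$ is incident to the omitted vertex~$b$, so appending~$e$ to~$Q$ yields a simple path from~$x$ whose last edge is~$e$ and which still avoids~$p$. Since the lemma only asks for existence, plain connectivity bookkeeping suffices — in particular Lemma~\ref{lemma:jackson} is not needed.

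\textbf{Case 1: $C$ is $3$-connected.} By Observation~\ref{obs:simple} and $|C|>2$ the graph~$C$ is simple, and since cubic graphs have even order, $|C|\geq 4$. Then $C-b$ is $2$-connected on at least three vertices, hence bridgeless, so $C-b-p$ is connected (if~$p$ is incident to~$b$ this graph is just $C-b$, still connected). As $x\neq b$ and $a\neq b$, both~$x$ and~$a$ lie in $C-b-p$, so the required path~$Q$ exists.

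\textbf{Case 2: $C$ is not $3$-connected.} By Lemma~\ref{lem:Core2}, $C$ has a two-cut $(U,W)$ with $W=V(C)\setminus U$ and cut-set $\{p,q\}$ for some edge~$q$, such that adjoining a single virtual edge to $C[U]$ resp.\ $C[W]$ produces cubic $3$-connected graphs~$C_U$ resp.\ $C_W$; write $p=x_1y_1$, $q=x_2y_2$ with $x_1,x_2\in U$, $y_1,y_2\in W$, the adjoined edges being $x_1x_2$ in~$C_U$ and $y_1y_2$ in~$C_W$. Using the symmetry $U\leftrightarrow W$ we may assume $x=x_1$; as in Case~1, each of $C_U,C_W$ is either simple on at least four vertices or the two-vertex multigraph, so deleting its adjoined virtual edge leaves a connected graph. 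Split on where~$e$ lies. \emph{(i) $e\in E(C[U])$:} both endpoints of~$e$ lie in~$U$, so running the Case~1 argument inside~$C_U$ (additionally deleting the harmless edge $x_1x_2$, and noting $p\notin E(C_U)$) gives~$Q$ from~$x_1$ to an endpoint of~$e$ within~$C[U]$, avoiding~$p$. \emph{(ii) $e=q$:} take any simple path from~$x_1$ to~$x_2$ in~$C[U]$ (connected, being $C_U-x_1x_2$); it avoids~$p$ since~$p$ leaves~$U$, and appending $q=e$ finishes, its new endpoint $y_2\in W$ being unvisited. \emph{(iii) $e\in E(C[W])$:} any $x_1$-path into~$W$ that avoids~$p$ must traverse~$q$, so concatenate a simple path~$P_1$ from~$x_1$ to~$x_2$ in~$C[U]$ (as in (ii)), the edge~$q$, and a simple path~$P_2$ from~$y_2$ to an endpoint of~$e$ in~$C[W]$ ending with~$e$; the last piece comes from rerunning the entire argument inside~$C_W$ with $y_2$, $y_1y_2$, $e$ in the roles of $x$, $p$, $e$ (so~$P_2$ avoids the virtual edge $y_1y_2$ and hence lies in~$C$; if $y_2$ is already an endpoint of~$e$, take $P_2=e$). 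Since $U\cap W=\emptyset$ the concatenation is a simple path; it avoids~$p$ (absent from~$C[U]$ and~$C[W]$, and $q\neq p$) and ends with~$e$.

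The one genuinely delicate point is sub-case~(iii): one must notice that avoiding the internal port~$p$ forces the path to cross the unique two-cut of~$C$ via its other edge~$q$, and then glue three subpaths living in the disjoint parts~$U$ and~$W$ while keeping the result a simple path. A secondary, routine point is checking that all the vertex and edge deletions above leave connected graphs; this holds because cubicity forces every $3$-connected graph occurring here to have at least four vertices, so deleting one vertex leaves a $2$-connected — hence bridgeless — graph.
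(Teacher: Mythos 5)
Your proof is correct, but it takes a genuinely different route from the one in the paper. The paper's proof is two lines: since $C$ is cubic and $2$-connected (Lemma~\ref{lem:cubicAndVirtual}), Proposition~\ref{prop:2Cycle} gives a cycle of $C$ through both $e$ and $p$; deleting $p$ from that cycle leaves a path from $x$ to the other endpoint of $p$ that contains $e$, and truncating it right after $e$ does the job. You instead pull in the heavier structural dichotomy of Lemma~\ref{lem:Core2} and then run a vertex-deletion argument (delete $b$, delete the offending edge, use that a $3$-connected cubic simple graph on $\geq 4$ vertices minus one vertex is still bridgeless) inside the $3$-connected piece(s), with a three-way split on which side of the unique remaining two-cut the edge $e$ lives. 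Your sub-case~(iii), where the path must cross to the far side through the second tombolo edge $q$ and be glued from three pieces on disjoint vertex sets, is the one delicate step, and you handle it correctly. The trade-off: the paper's route is shorter and makes the lemma visibly independent of Lemma~\ref{lem:Core2} — it only needs $C$ cubic and $2$-connected, and it recycles Proposition~\ref{prop:2Cycle}, which is needed elsewhere anyway. Your route is more laborious but is a first-principles connectivity argument, doesn't lean on the slightly loosely argued existence clause of Proposition~\ref{prop:2Cycle}, and makes the extra slack afforded by $3$-connectivity explicit. For economy you could have short-circuited the casework by noting directly that a cubic $2$-edge-connected multigraph is also $2$-vertex-connected, so any two of its edges lie on a common cycle — that standard fact is exactly what the paper uses.
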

\begin{proof}
  Since $C$ is cubic and $2$-connected, by
  Proposition~\ref{prop:2Cycle},  $e$ and $p$ are in some cycle in $C$.
  Consequently, for every vertex $x$ adjacent to $p$ there is a path in this cycle
  that does not include $p$ and can be seen as starting in $x$ and having $e$ as
  its last edge.
\end{proof}

\begin{lemma}\label{lemma:TwoEdges}
  Let $H$ be a peninsula, $p$ its internal port, $C$ its core and  $e_1$, $e_2$
  distinct edges in $E(C)-p$. For every vertex $x$ adjacent to $p$,
  there exists a path in $C$ that avoids $p$, starts
  in $x$, contains $e_1$ and $e_2$, and ends in either $e_1$ or $e_2$.
\end{lemma}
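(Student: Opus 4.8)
The plan is to reduce the two-edge case to repeated applications of the one-edge lemma (Lemma~\ref{lemma:OneEdge}), handling the combinatorial obstruction that arises when a naive concatenation would reuse $p$ or revisit vertices. First I would invoke Lemma~\ref{lemma:OneEdge} to obtain a path $Q_1$ in $C$ starting in $x$, avoiding $p$, and having $e_1$ as its last edge. This puts us at one endpoint of $e_1$, say $u$, with a clean walk so far. Now I would like to extend from $u$ to a path that reaches $e_2$; the subtlety is that I need a path from $u$ to $e_2$ that stays inside $C$, avoids $p$, and — crucially — does not reuse any vertex of $Q_1$.

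For the extension step I would work in the graph $C' := C - (V(Q_1) - u)$, i.e.\ delete all internal vertices of $Q_1$ but keep $u$. The key structural fact to establish is that $e_2$ still lies in $C'$ and that $C'$ has a $u$--$e_2$ path avoiding $p$; one clean way is to argue that, since $C$ is $2$-connected and cubic, after removing the internal vertices of a path we still have enough connectivity between $u$ and the endpoints of $e_2$, or alternatively to set things up so that $Q_1$ is chosen \emph{minimally} — e.g.\ via the cycle from Proposition~\ref{prop:2Cycle} through $p$ and $e_1$, take $Q_1$ to be the arc from $x$ not containing $p$, so $Q_1$ together with a cycle through $e_1$ and $e_2$ can be spliced. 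Actually the slicker route is: apply Proposition~\ref{prop:2Cycle} (which gives, for any two edges of the cubic $2$-connected $C$, a cycle containing both) to get a cycle $Z$ through $e_1$ and $e_2$; if $p \notin Z$, then combine the $x$-to-$e_1$ path from Lemma~\ref{lemma:OneEdge} with the arc of $Z$ from $e_1$ to $e_2$, being careful to glue at the shared endpoint and to cut the combined walk at the first point it would repeat a vertex, keeping the sub-walk that still contains $e_1$ (or re-routing). If $p \in Z$, then $Z - p$ is already a path through $e_1$ and $e_2$ with the endpoints of $p$ as its ends, and from $x$ (an endpoint of $p$) we just follow $Z - p$, which automatically avoids $p$, contains both $e_1, e_2$, and ends in whichever of $e_1, e_2$ is last along it.

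I expect the main obstacle to be the vertex-disjointness bookkeeping in the case $p \notin Z$: naively concatenating the Lemma~\ref{lemma:OneEdge} path $Q_1$ (from $x$ to $e_1$) with an arc of $Z$ can create repeated vertices where $Q_1$ first meets $Z$, and one must argue that truncating at that first intersection vertex still yields a path that contains both $e_1$ and $e_2$ and ends in one of them — which requires choosing the right arc of $Z$ and possibly the right direction. The cleanest fix is to let $w$ be the first vertex of $Q_1$ (scanning from $x$) that lies on $Z$; replace the tail of $Q_1$ after $w$ by the appropriate arc of $Z$ that picks up $e_1$ and $e_2$ and terminates at an endpoint of one of them without reusing $w$ or the earlier part of $Q_1$. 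Since $Z$ is a cycle and $p \notin Z$, such an arc exists, giving a path that avoids $p$, starts at $x$, contains $e_1, e_2$, and ends in $e_1$ or $e_2$, as required. The case analysis on whether $e_1, e_2$ are adjacent (share a vertex) may need a brief separate mention but is handled the same way since Proposition~\ref{prop:2Cycle} still supplies the enclosing cycle.
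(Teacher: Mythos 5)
Your final approach coincides with the paper's proof: obtain via Proposition~\ref{prop:2Cycle} a cycle $Z$ through $e_1$ and $e_2$, then split on whether $p$ lies on $Z$, handling the off-cycle case by connecting $x$ to $Z$ while avoiding $p$ and splicing in an arc of $Z$ that picks up both edges. The paper obtains the $x$--$Z$ connection directly from two edge-disjoint paths guaranteed by $2$-connectivity (rather than from Lemma~\ref{lemma:OneEdge}) and is considerably terser about the truncation and vertex-disjointness bookkeeping that you spell out, but it is the same argument.
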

\begin{proof}
  Since $C$ is cubic and $2$-connected and, by
  Proposition~\ref{prop:2Cycle},  $e_1$ and $e_2$ are in some cycle in $C$. If
  $p$ is also in $C$ then for every vertex $x$ adjacent to $p$ there exists a
  path in this cycle that avoids $p$ and starts in $x$, contains both $e_1$ and $e_2$
  and ends in either~$e_1$ or~$e_2$.

  Thus assume that $p$ is not on the cycle and fix one of the endpoints of $p$
  as $x$. Since $C$ is $2$-connected there exist two edge-disjoint paths from
  $x$ to some arbitrary vertex on the cycle and at least one of these two
  paths avoids $p$. Therefore there exists a path starting at $x$, avoiding
  $p$, containing both $e_1$ and $e_2$ and ending in either $e_1$ or $e_2$,
  as claimed.
\end{proof}

\begin{definition}[\POWER]
	Let $H$ be a peninsula, $p = xy$ its internal port.
	We denote by $\power_s(H)$ the length of the longest path in $H$ that starts
  in $s \in \{x,y\}$ and avoids $p$ and we write
	\[
		\power(H) := \min_{s \in \{x,y\}} \power_s(H).
	\]
\end{definition}

\noindent 
The following lemma now relates the existence of a long path in the \emph{core}
of a peninsula---which might include virtual edges---to the existance of a
long path in the peninsula itself. The latter path, by definition, only
consists of real edges. We express this fact by proving recurrent
inequalities for $\power(\cdot)$.

\begin{lemma}\label{lemma:PathC}
	Let $H$ be a peninsula, $p$ its internal port, $C$ its core. Let
	further $P$ be a path in $C$ that avoids $p$ but starts in a vertex $x$
	that is incident to $p$ and has length at least $\power(H)$.
	
  Assume $P$ contains two virtual edges $e_1, e_2$ which link the peninsulas
  $S_1,S_2$ to $C$, respectively. Assume $e_1$ is the edge closer to $x$ on $P$.
  Then
  \begin{align*}
    \power(H) &\geq \twoCycle{|S_1|} +\power(S_2).
  \intertext{%
    If $P$ contains only one virtual edge~$e_1$ which links $S_1$ to $C$, then
  }
    \power(H) &\geq 1+\power(S_1).
  \end{align*}
\end{lemma}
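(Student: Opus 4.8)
The plan is to convert the path $P$, which lies in the core $C$ and may traverse virtual edges, into a genuine path $Q$ of $G$ that is contained in $V(H)$, starts in $x$, and avoids $p$; its length then lower-bounds $\power_x(H)$, and applying this with a path produced for each of the two endpoints of $p$ (as the preceding lemmas do) yields the bound for $\power(H)=\min_{s}\power_s(H)$. (The hypothesis that $|P|\geq\power(H)$ plays no role below.) Before building $Q$ I would fix notation. Every virtual edge $e\neq p$ on $P$ is the external port of a unique peninsula linked to $C$; by Lemma~\ref{lem:cubicAndVirtual} the peninsulas arising this way are pairwise vertex-disjoint, vertex-disjoint from $C$, and pairwise non-adjacent, and since $P$ avoids $p$ all of them lie inside $V(H)$. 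For such a peninsula $S$ with external port $e=ab$ (so $a,b\in V(C)$) I would write $aa'$ and $bb'$ for the two real tombolo edges of $G$ with $a',b'\in V(S)$; by Proposition~\ref{prop:oneEntry} the edge $a'b'$ is the internal port of $S$ and the only virtual edge of $S$.

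Suppose first that $P$ has two virtual edges $e_1,e_2$, with $e_1$ the one closer to $x$, linking $S_1$ and $S_2$ respectively; let $a_i$ denote the endpoint of $e_i$ that $P$ reaches first from $x$. I would let $Q$ follow $P$ from $x$ up to $a_2$, replacing every virtual edge that $P$ crosses along the way (in particular $e_1$) by the corresponding $a_i$-$b_i$-detour of Lemma~\ref{lem:islandDetour} through $S_i$; the $S_1$-detour then occurs as a contiguous subpath of length at least $\twoCycle{|S_1|}$, and since all the other pieces have non-negative length this initial stretch of $Q$ has length at least $\twoCycle{|S_1|}$. Having reached $a_2$, instead of crossing $e_2$ I would append the real edge $a_2a_2'$ followed by a longest path of $S_2$ that starts in $a_2'$ and avoids $a_2'b_2'$; this tail has length $\power_{a_2'}(S_2)\geq\power(S_2)$, so $|Q|\geq\twoCycle{|S_1|}+1+\power(S_2)$. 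If instead $P$ has exactly one virtual edge $e_1$, linking $S_1$, then I would take for $Q$ the (entirely real) initial segment of $P$ from $x$ to $a_1$, followed by the real edge $a_1a_1'$ and a longest path of $S_1$ from $a_1'$ avoiding $a_1'b_1'$, which gives $|Q|\geq 1+\power_{a_1'}(S_1)\geq 1+\power(S_1)$.

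It remains --- and this is the step I expect to need the most care --- to verify in each case that $Q$ is a genuine simple path of $G$ with $V(Q)\subseteq V(H)$ that uses only real edges and avoids $p$, so that $\power_x(H)\geq|Q|$. Simplicity holds because the interiors of the detours and of the final tail lie in pairwise-disjoint peninsulas, each disjoint from $C$, so no vertex is revisited; only real edges are used because every edge of $Q$ is either a real edge of $G[V(C)]$, a real tombolo edge joining some $S_i$ to $C$, or a real edge strictly inside some $S_i$ --- the detours and the tail avoiding, by Proposition~\ref{prop:oneEntry}, the unique virtual edge of each peninsula; and $p$ is avoided because $P$ avoids it while every remaining edge of $Q$, being a tombolo edge or an edge inside a peninsula, lies outside $E(C)$, whereas $p\in E(C)$. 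Dropping the superfluous summand $1$ then gives both displayed inequalities. The construction is essentially forced --- one spends the peninsula farthest from $x$ on the recursive term $\power(\cdot)$ and merely passes through the nearer ones --- so the real work is the disjointness-and-realness bookkeeping, for which Lemma~\ref{lem:cubicAndVirtual} is the decisive tool.
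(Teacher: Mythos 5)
Your proposal is correct and follows essentially the same route as the paper: you replace each virtual edge of $P$ with the real detour through the corresponding peninsula supplied by Lemma~\ref{lem:islandDetour} (using Lemma~\ref{lem:cubicAndVirtual} for disjointness and realness), and for the last virtual edge you descend once into that peninsula via a real tombolo edge and spend a $\power(\cdot)$-length path there without resurfacing. Your two side remarks --- that the hypothesis $|P|\geq\power(H)$ is never used, and that the construction bounds $\power_x(H)$ so one must invoke it for both endpoints of $p$ to bound $\power(H)$ --- are both accurate observations about the paper's statement, and your bookkeeping of simplicity and realness is if anything slightly more explicit than the original.
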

\begin{proof}
  Note that $\power(H)$ concerns paths of the peninsulas $H$, hence in order
  to use $P$ to find lower bounds on $\power(H)$ we first need to argue that
  we can replace all virtual edges on $P$ by paths through peninsulas 
  that contain only real edges. 
	Let $\mathcal S$ be the collection of all peninsulas linked to $C$. By
  Lemma~\ref{lem:cubicAndVirtual}, all members of $\mathcal S$ are pairwise disjoint,
  not connected by edges and also disjoint from $C$. Therefore we can apply
  Lemma~\ref{lem:islandDetour} to each virtual edge $e \in P$ which 
  links $S \in \mathcal S$ to $C$ and replace it a real path of length
  at least $\twoCycle{|S|}$.
  
  It is left to show that we can make the above claimed guarantees on 
  $\power(H)$. In the first case, we replace $e_1$ as described above
  by a path of length $\twoCycle{|S_1|}$, however, we replace $e_2$ by 
  crossing the tombolo and finding a path of length $\power(S_2)$ inside
  of $S_2$, without re-surfacing through the tombolo again (hence all
  edges of $P$ after $e_2$ are lost). Note that such a path exists by the
  definition of $\power(\cdot)$, proving the first inequality.
  
  Similarly, if $P$ contains only one virtual edge $e_1$, we replace it
  by a path of length $\power(S_1)$ without resurfacing into $C$. The constructed
  path contains at least one edge more than $\power(S_1)$ since we count the
  tombolo-edge that leads into $S_1$. This proves the second inequality.
\end{proof}

\begin{theorem}\label{thm:LB}
  Let $G$ be $2$-connected and cubic. Then $G$ has a path of length at least
  $\pathLength{n} = O(\frac{\log^2 n}{\log\log n})$.
\end{theorem}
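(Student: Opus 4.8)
The plan is to set up an induction on the number of vertices $|H|$ of a peninsula, proving that $\power(H) \geq \pathLengthWB{|H|} - \lengthBias$ (with the understanding that for small $|H|$ the bound is vacuous since the right-hand side is negative), and then read off the theorem by applying this to $H = G$ with an arbitrary fixed internal port. First I would dispose of the base case: whenever $|H|$ is below some absolute constant the claimed bound is nonpositive and hence trivially holds, since $\power(H) \geq 0$ always (and in fact $\power(H)\geq 1$ once $H$ has at least one real edge incident to a port vertex avoiding $p$). For the inductive step, take a peninsula $H$ with internal port $p$ and let $C$ be its core. By Lemma~\ref{lem:Core2} and Lemma~\ref{lem:CorePath}, $C$ contains, for any vertex $x$ incident to $p$, a path $P$ avoiding $p$, starting at $x$, of length at least $|C|^{0.69}/2 - 1$.

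The heart of the argument is to combine this core-path with the recursion of Lemma~\ref{lemma:PathC}. The path $P$ traverses some virtual edges, each linking a distinct peninsula $S_i$ to $C$ (distinct and mutually non-adjacent by Lemma~\ref{lem:cubicAndVirtual}), and $\sum_i |S_i| \leq n$ since the peninsulas together with $C$ partition $V(G)$. There are two regimes. If $P$ uses at least two virtual edges, pick the first two along $P$, call them $e_1$ (linking $S_1$) and $e_2$ (linking $S_2$); then Lemma~\ref{lemma:PathC} gives $\power(H) \geq \twoCycle{|S_1|} + \power(S_2)$. To make this useful I would instead want to pick $e_2$ to link the \emph{largest} of the linked peninsulas: using Lemma~\ref{lemma:TwoEdges} in place of the bare core-path, one can route the path to start at $x$, avoid $p$, and contain any two prescribed edges $e_1,e_2$ of $C-p$, ending on one of them. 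So choose $e_2$ to link a largest linked peninsula $S_{\max}$ with $|S_{\max}| \geq \text{(something)}$, and $e_1$ any other linked peninsula; then $\power(H) \geq \twoCycle{|S_1|} + \power(S_{\max})$, and by induction $\power(S_{\max}) \geq \pathLengthWB{|S_{\max}|} - \lengthBias$. If instead $P$ uses at most one virtual edge, then $P$ already has length $\geq |C|^{0.69}/2 - 1$ entirely within $C$ except possibly one detour; Lemma~\ref{lemma:PathC}'s second case gives $\power(H) \geq 1 + \power(S_1)$, but more to the point one can just keep the whole real path, so $\power(H) \geq |C|^{0.69}/2 - 1 \geq |H|^{\Omega(1)} / (\text{linked-peninsula count})^{\Omega(1)}$ after relating $|C|$ to $|H|$ and the sizes of linked peninsulas.

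The arithmetic to close the recursion is the step I expect to be the main obstacle. The recursion schematically reads: $|H|$ splits as $|C|$ plus the sizes of linked peninsulas; along a path of length $\approx |C|^{0.69}$ in $C$ we either find one large sub-peninsula to recurse into (gaining $\power(S)$) or we find many virtual edges, in which case $|C|$ is large relative to any single $S_i$ and the core-path is itself long. One wants to show that if $|H| = m$ then in $d \approx \log\log m$ levels of recursion the peninsula size drops from $m$ to roughly $\sqrt{m}$ while accumulating, at each level, an additive term of order $\log(\text{current size})$ from the $\twoCycle{\cdot}$ contributions (Lemma~\ref{lem:islandDetour}); summing $\log m + \log\sqrt m + \log m^{1/4} + \cdots$ over $\Theta(\log\log m)$ levels yields $\Theta\!\big(\frac{\log^2 m}{\log\log m}\big)$, which matches $\pathLengthWB{m}$. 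The delicate points are: (i) verifying that one can always either recurse into a peninsula of size at least $\sqrt{|C|}$ or else guarantee $|C|^{0.69}$ is already of the target order — this is a case split on whether the many-small-peninsulas situation makes $|C|$ itself large; (ii) tracking the error term $\lengthBias = 36\lengthBiasOrder$ so that the additive losses ($-1$, the $-2\log\log|S|-8$ inside $\twoCycle{\cdot}$, the $-1/2$-type slack from $\power$ versus $|C|^{0.69}$) telescope into a single $O(\log^2\log n)$ correction rather than compounding; and (iii) checking the induction is well-founded, i.e. $|S_{\max}| < |H|$ strictly, which holds because $C$ is nonempty (it contains $p$, by Lemma~\ref{lem:IncP}) and disjoint from all linked peninsulas. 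Once the recurrence $\power(m) \geq \max\{\,|C|^{0.69}/2 - 1,\; \twoCycle{|S_1|} + \power(|S_{\max}|)\,\}$ with $|C| + \sum|S_i| = m$ is pinned down, the bound $\power(m) \geq \pathLengthWB{m} - \lengthBias$ should follow by a direct (if fiddly) induction, and then $\power(G) \geq \pathLength{n}$ gives a path of the claimed length in $G$.
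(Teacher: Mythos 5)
Your overall framework is the same as the paper's: induction on peninsula size, using the core $C$, Lemma~\ref{lem:CorePath}, and the $\power(\cdot)$ recurrence from Lemma~\ref{lemma:PathC}. However, there are two genuine gaps that would stop a literal implementation of your sketch from going through.

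\textbf{Missing case split: the ``one huge peninsula'' regime.} You propose to always route a path through the largest linked peninsula $S_{\max}$ together with \emph{any other} linked peninsula $S_1$ via Lemma~\ref{lemma:TwoEdges}, giving $\power(H) \geq \twoCycle{|S_1|} + \power(S_{\max})$. This fails when $S_{\max}$ is so dominant that \emph{every} other linked peninsula is tiny. In that case $\twoCycle{|S_1|}$ is negative and the recurrence does not close. There is no averaging argument that gives a useful lower bound on the \emph{second}-largest linked peninsula in general. The paper avoids this by a trichotomy: (a) if $|C|$ exceeds $\log^3 k$, the core path of length $|C|^{0.69}/2-1$ already beats the target, done; (b) if some linked peninsula $S$ has $|S| > k(1-1/\log k)$, use Lemma~\ref{lemma:OneEdge} (not Lemma~\ref{lemma:TwoEdges}) and the weaker recurrence $\power(H)\geq 1+\power(S)$, which still closes because the argument drops only by a $1/\log k$ fraction; (c) only when every linked peninsula has $|S|<k(1-1/\log k)$ and $|C|\leq\log^3 k$ is it guaranteed (by averaging over at most $1.5|C|$ peninsulas) that \emph{two} of them have size $\geq k/(2\log^4 k)$, at which point the two-edge routing and the $\twoCycle{\cdot}$ bonus kick in. Your case split on ``how many virtual edges does $P$ use'' doesn't map onto this distinction, and you do not handle regime (b) at all.

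\textbf{The telescoping arithmetic is wrong.} You write that the gains telescope as ``$\log m + \log\sqrt m + \log m^{1/4} + \cdots$ over $\Theta(\log\log m)$ levels yields $\Theta(\log^2 m/\log\log m)$.'' This sum is a geometric series equal to $2\log m$, which is $O(\log m)$, an order of magnitude short of the target. The actual behaviour of the recursion is: in case (c) each level shrinks the argument by only a $\operatorname{polylog}(k)$ factor (not a square root) while gaining $\Theta(\log k)$; this sustains roughly $\log k/\log\log k$ levels at gain $\Theta(\log k)$ each, totalling $\Theta(\log^2 k/\log\log k)$. In case (b) the argument barely shrinks and you gain only $+1$, so the recurrence must be checked to close there too --- this is the purpose of the computation $\delta(k(1-1/\log k)) + 1 \geq \delta(k)$ in the paper, which is not a consequence of any telescoping heuristic but of the specific shape of $\delta(k) = \frac{\log^2 k}{1+4\log\log k} - 36\log^2\log n$. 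You correctly flag the arithmetic as ``the main obstacle,'' but the heuristic you offer to justify it is not correct, and without either the case (b) recurrence or the right growth rate per level the induction does not close.

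(Minor: the base case threshold in the paper is $k \leq \log^6 n$, not an absolute constant, because $\lengthBias$ is a function of $n$; and the condition that sends you directly to Jackson's bound is $|C| > \log^3 k$, not ``$P$ uses at most one virtual edge.'')
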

\begin{proof}
	We prove that for every peninsula $H$ on $k$ vertices $\power(H)\geq
	\delta(|H|)$, e.g. $H$ has a path of length at 
  least $\delta(k) := \pathLength{k}$. 
  Since $G$ is also a peninsula this implies the theorem.

	Let set $k = |H|$. Suppose first that $k \leq \numOfVerticesInH$. Then, we
	only need to show that $\power(H) \geq \delta(\numOfVerticesInH)$.
	The function $\delta(\numOfVerticesInH)$ is negative for $n \geq 2$,
	thus the inequality trivially holds. Therefore assume that
	$k > \numOfVerticesInH$ and hence we may assume that $k$ is 
	sufficiently large for the asymptotic inequalities we use in the sequel.
	By induction, assume that $\power(\hat H)\geq \delta(|\hat H|)$ for every peninsula $\hat H$ with  $|\hat H| < k$. We next prove the induction step.

	Let $p$ be $H$'s internal port, $x$ a vertex adjacent to $p$ and let
	$C$ be the core of $H$. By Lemma~\ref{lem:Core2}, $C$ is either
	$3$-connected or $V(C)$ can be partitioned into two set $U,U'$ such that
	the virtual subgraphs induced by $U$ and $U'$ are both $3$-connected. If
	$|C|> \log^{3}{k}$, then by Lemma~\ref{lem:CorePath} there exists a path
	in $C$ which starts in $x$, avoids $p$ and has length at least
	$|C/2|^{0.69}-1$. Consequently, by Lemma~\ref{lemma:PathC}, there exists a
	path in $H$ that includes only real edges, avoids $p$, has $x$ as an
	endpoint and has length at least $|C/2|^{0.69}-1> \log^{2}{k}/2 >
	\delta(k)$ and the theorem statement holds. Thus assume that $|C|<
	\log^{3}{k}$. 

	Since $\log^{3}{k} < k$, $E(C)$ has at least one virtual
	edge that is not $p$. Suppose that there exists a peninsula $S$
	linked to $C$ via $e \in E(C)$ such that $|S| > k(\heavyPE)$. Let $x$ be a vertex adjacent to $p$. By Lemma~\ref{lemma:OneEdge}, there exists path in $C$ 
	that starts in $x$, avoids $p$ and has $e$ as its last edge.
	Consequently, by Lemma~\ref{lemma:PathC}, 
	\[
		\power(H) \geq \delta\big(|S|\big)+1 
		 	   \geq \delta\Big(k\big( \heavyPE \big)\Big) + 1.
	\]
	We now show that the above implies $\power(H) > \delta(k)$:
  \begin{align*}
  		&\phantom{{}={}} \delta\Big(k\big(\heavyPE\big)\Big) + 1 + \lengthBias 
        = \pathLengthWB{k\left(\heavyPE\right)} + 1 \\
  		& > \frac{\Big(\log k - \frac{1}{\log k}\Big)^2}{\pathLengthDenom{k}}  + 1
  		  > \frac{\log^2k}{\pathLengthDenom{k}} 
            - \frac{2\log k (\log k)^{-1}}{\pathLengthDenom{k}} + 1 \\
  		& > \frac{\log^2k}{\pathLengthDenom{k}} = \delta(k) + \lengthBias.
  \end{align*}
	It remains to show that $\power(H) \geq \delta(k)$ in case where
	no such big peninsula exists.

	\newcommand{\SILB}{\frac{k}{2\log^4k}}
	Assume that every peninsula $S$ linked to $C$ via an edge other
	than $p$ contains less than $k(\heavyPE)$ vertices. Since $C$ is cubic,
	$|E(C)| = 1.5|C|$ which means that there are at most 
	$1.5|C| < 1.5 \log^{3}{n}$ peninsulas linked to $C$.

	Let $S_1, S_2$ be the two largest peninsulas attached to $C$ via $e_1, e_2
	\in E(C)-p$, respectively, with $|S_1| \geq |S_2|$. By averaging we conclude
	that $|S_1| \geq \frac{k-|C|}{1.5 \log^{3}{n}} > \frac{k}{2\log^3n}$.
	However, by our prior assumption, $|S_1| < k(\heavyPE)$. Therefore we can
	assert that 
	\[
		|S_2| \geq \frac{k - k(\heavyPE) -|C|}{1.5\log^{3}{k}} 
			  = \frac{\frac{k}{\log{k}}-|C|}{1.5\log^{3}{k}} > \SILB.
	\]
	Let $x$ be a vertex adjacent to $p$. By, Lemma~\ref{lemma:TwoEdges} there
	exists a path in $C$ which starts in $x$, avoids $p$ and contains
	both $e_1$ and $e_2$. Consequently, by
	Lemma~\ref{lemma:PathC},  
	\[
		\power(H)\geq \delta\Big(\SILB\Big) + \twoCycle{\SILB},
	\]
	where $\SILB$ is a lower bound on the sizes of both $S_1$ and $S_2$. The
	following computation implies that $\power(H)\geq\delta(k)$:
	\begin{align*}
		&\phantom{{}<{}} \delta\left(\SILB\right) + \twoCycle{\SILB} + \lengthBias \\
		&> \pathLengthWB{\SILB} + 2\log{k} - 2 - 8\log\log k - 2\log\log k  - 8\\
		&>  \frac{\left(\log k  - 1 - 4\log\log k\right)^2}{\pathLengthDenom{k}} + 2\log k - 10\log\log k - 10\\
		&> \frac{\log^2k}{\pathLengthDenom{k}} - \log k + 1 + 4\log\log k  + 2\log k - 10\log\log k - 10\\
		&> \frac{\log^2k}{\pathLengthDenom{k}} + \log k - 6\log\log k - 10
		 > \delta(k) + \lengthBias.
	\end{align*}
  This concludes the proof.
\end{proof}

\end{document}